\let\oldsqrt\sqrt
\def\sqrt{\mathpalette\DHLhksqrt}
\def\DHLhksqrt#1#2{%
\setbox0=\hbox{$#1\oldsqrt{#2\,}$}\dimen0=\ht0
\advance\dimen0-0.2\ht0
\setbox2=\hbox{\vrule height\ht0 depth -\dimen0}%
{\box0\lower0.4pt\box2}}
\newcommand{\tr}[1]{\operatorname{tr}\left[#1\right]}
\newcommand{\tra}{\operatorname{tr}}
\newcommand{\id}{\mathbbm{1}}
\newcommand{\Jcal}{\mathcal{J}}
\newcommand{\Kcal}{\mathcal{K}}
\newcommand{\Pprob}{\mathbb{P}}
\newcommand{\ident}{\mathbbm{1}}
\def\bra#1{\mathinner{\langle{#1}|}}
\def\ket#1{\mathinner{|{#1}\rangle}}
\def\braket#1{\mathinner{\langle{#1}\rangle}}
\newcommand*\xbar[1]{%
   \hbox{%
     \vbox{%
       \hrule height 0.5pt 
       \kern0.5ex
       \hbox{%
         \kern-0.2em
         \ensuremath{#1}%
         \kern-0.0em
       }%
     }%
   }%
} 
\def\BraVert{\egroup\,\mid\,\bgroup}
\def\ketbra#1#2{\ket{#1\vphantom{#2}}\!\bra{#2\vphantom{#1}}}
\def\bra#1{\mathinner{\langle{#1}|}}
\def\ket#1{\mathinner{|{#1}\rangle}}
\def\braket#1{\mathinner{\langle{#1}\rangle}}
\theoremstyle{plain}
\newtheorem{thm}{\protect\theoremname}
\theoremstyle{plain}
\theoremstyle{plain}
\theoremstyle{remark}
\newtheorem*{rem*}{\protect\remarkname}
\theoremstyle{definition}
\newtheorem*{rems*}{Remark}
\theoremstyle{plain}
\theoremstyle{plain}
\theoremstyle{definition}
\theoremstyle{plain}
\newtheorem*{thm*}{\protect\theoremname}
\theoremstyle{plain}
\newtheorem*{lem*}{\protect\lemmaname}
\theoremstyle{plain}
\theoremstyle{definition}
\newtheorem{example}{\protect\examplename}
\providecommand{\propositionname}{Proposition}
\providecommand{\theoremname}{Theorem}
\providecommand{\lemmaname}{Lemma}
\providecommand{\remarkname}{Remark}
\providecommand{\conjecturename}{Conjecture}
\providecommand{\definitionname}{Definition}
\providecommand{\corollaryname}{Corollary}
\providecommand{\observationname}{Observation}
\providecommand{\examplename}{Example}
\newcommand{\cv}[1]{(#1)}
\newcommand{\cvb}[1]{[#1]}
\newcommand{\cvV}[1]{\Vert #1\Vert}
\newcommand{\cvr}[1]{\left\langle #1\right\rangle}
\newcommand{\iden}{\id}
\newcommand{\iu}{{i\mkern1mu}}
\begin{document}

\renewcommand{\abstractname}{}

\title{Connecting Commutativity and Classicality for Multi-Time Quantum Processes} 

\author{Fattah Sakuldee\,\orcidlink{0000-0001-8756-7904}}
\email{fattah.sakuldee@ug.edu.pl}
\affiliation{The International Centre for Theory of Quantum Technologies, University of Gda\'nsk, Ba\.zy\'nskiego 1A, 80-309 Gda\'nsk, Poland}

\author{Philip Taranto\,\orcidlink{0000-0002-4247-3901}}
\email{philip.taranto@tuwien.ac.at} 
\affiliation{Atominstitut, Technische Universit{\"a}t Wien, 1020 Vienna, Austria}
\affiliation{University of Vienna, Vienna Doctoral School in Physics, Boltzmanngasse 5, 1090 Vienna, Austria}
\affiliation{Institute for Quantum Optics and Quantum Information, Austrian Academy of Sciences, Boltzmanngasse 3, 1090 Vienna, Austria}

\author{Simon Milz\,\orcidlink{0000-0002-6987-5513}}
\affiliation{Institute for Quantum Optics and Quantum Information, Austrian Academy of Sciences, Boltzmanngasse 3, 1090 Vienna, Austria}

\begin{abstract}
Understanding the demarcation line between classical and quantum is an important issue in modern physics. The development of such an understanding requires a clear picture of the various concurrent notions of `classicality' in quantum theory presently in use. Here, we focus on the relationship between \emph{Kolmogorov consistency} of measurement statistics---the foundational footing of classical stochastic processes in standard probability theory---and the \emph{commutativity} (or absence thereof) of measurement operators---a concept at the core of quantum theory. Kolmogorov consistency implies that the statistics of sequential measurements on a (possibly quantum) system could be explained entirely by means of a classical stochastic process, thereby providing an operational notion of classicality. On the other hand, commutativity of measurement operators is a structural property that holds in classical physics and its breakdown is the origin of the uncertainty principle, a fundamentally quantum phenomenon. Here, we formalise the connection between these two a priori independent notions of classicality, demonstrate that they are distinct in general and detail their implications for memoryless multi-time quantum processes. 
\end{abstract}

\date{\today}
		
\maketitle

\section{Introduction}
Since the inception of quantum theory, various notions of `classicality' for the states of physical systems and measurements thereof have been put forth, including those based on the commutativity of observables~\mbox{\cite{Luders2006, Alicki2008a, Alicki2008b, Facchi2012, Facchi2013, fazio_witnessing_2013,Sakuldee2021}} (the breakdown of which being the origin of Heisenberg's uncertainty principle~\cite{Heisenberg1927,Ozawa2003,Busch2013}), the absence of coherence~\cite{Aberg2006,Baumgratz2014,Levi2014,Winter2016,Streltsov2017Col,Hu2018} or quantum discord~\cite{Zurek2000,Ollivier2001,Modi_2012_Discord}, the non-negativity of the Wigner function~\cite{Dodonov2002,Kenfack2004,Cormick2006,Bohmann_2020_PRL,Bohmann_2020_Quantum}, the broadcastability of states~\cite{Barnum1996,Piani2008}, or the objectivism that emerges through Darwinist arguments~\cite{Zurek2009,Horodecki2015,Le2018,Le2019,Korbicz_2021}. Most of these concepts of classicality are \textit{static} in the sense that they focus on properties of quantum states and/or the compatibility of measurements in situations where there is no dynamics taking place between them. When extending such considerations to \emph{processes}, i.e., physical systems that display non-trivial evolution and are measured at several points in time, classicality is often linked to the inability of a process to generate and/or detect states displaying such aforementioned properties, as well as certain properties of the resulting multi-time statistics (see, e.g., Refs.~\cite{Smirne2019b, Strasberg2019, Milz2019,Smirne2019}). Additionally, for the case of sequential measurements, non-commutativity is the key ingredient in the generalisation of stochastic processes to the quantum realm~\cite{Accardi1982}. However, besides partial results, the links between such---a priori inequivalent---notions of classicality (or non-classicality) remain poorly understood and subject to debate~\cite{Wilde2010,Briggs2011,Miller2012,Leon-Montiel2013,Oreilly2014}, both in static and dynamic scenarios.

Broadly speaking, existing notions of classicality fall into one of two categories: \emph{Structural} ones, i.e., criteria for classicality based on mathematical properties like the commutativity (of observables) or the coherence of quantum states; and \emph{operational} ones, i.e., those based only upon experimentally accessible entities, such as the multi-time statistics obtained from probing an evolving quantum state at different points in time. While both types of considerations are well-motivated in their own right, the connection between such---generally inequivalent---structural and operational notions of classicality has not yet been fully established; in the static case, it is only known for special cases~\cite{Luders2006,Busch1998, Arias2002, Heinosaari2010} and in the dynamic scenarios that we will focus on such links are only known when restricting to projective measurements of a fixed observable~\cite{Smirne2019,Smirne2019b, Strasberg2019, Milz2019}. Here, we establish more general connections between structural and operational notions of classicality for the case of a quantum system that undergoes non-trivial dynamics and is probed at multiple points in time with general instruments. Specifically, we analyse the connection between the satisfaction of so-called \emph{Kolmogorov consistency conditions}---an operational notion of classicality---and commutativity of the operators that `naturally' assume the role of observables in multi-time processes (we motivate and identify these operators in Sec.~\ref{sec:results}).  

Satisfaction of the former criterion implies the existence of a classical (i.e., described by standard probability theory) stochastic process that leads to the same statistics as the one observed when the underlying quantum process is measured~\cite{Milz2020}; in other words, although such a process might actually be quantum in nature, one cannot conclude this from the collected statistical data alone. Importantly, checking the Kolmogorov consistency conditions amounts to a clear operational notion of classicality that can be tested without any additional knowledge of the underlying dynamics or physical theory. For the case of quantum theory restricted to sequential measurements in a fixed basis, this criterion has been connected to the ability of the dynamics to generate and detect coherences, both in the Markovian (memoryless)~\cite{Smirne2019,Smirne2019b} as well as non-Markovian setting~\cite{Strasberg2019,Milz2019}. Extensions to more general measurements have remained elusive to date.

On the other hand, (non-)commutativity of observables---the structural property that we consider here---lies at the core of quantum theory.\footnote{This fact notwithstanding, recently, the possibility of non-commutativity in classical physics has been discussed~\cite{morgan_algebraic_2020}.} Intuitively, commutativity of two observables $A$ and $B$ implies that they are jointly measurable; that is, given an arbitrary quantum state $\rho$, the probability of obtaining an outcome pertaining to observable $B$ is independent of whether $A$ was measured before it or not (and vice versa). This connection between measurement \textit{non-invasiveness}---an operational notion of classicality---and commutativity of observables---a structural notion of classicality---was first considered by L\"uders~\cite{Luders2006} for the case of projective measurements and later extended to more general scenarios~\cite{Busch1998, Arias2002, Heinosaari2010} (see also Sec.~\ref{sec:preliminaries}), where it was shown that commutativity and measurement non-invasiveness coincide in many cases. Such results thereby endow the commutativity of relevant operators with operational meaning in terms of a notion of classicality based upon measurement non-invasiveness. Importantly, such a direct connection between these two a priori distinct concepts can only be meaningfully established under the assumption that there are only \textit{two} sequential measurements being considered.

In the multi-time setting with non-trivial dynamics between measurements, it is a priori unclear how L\"uders' results carry over. In other words, how can one link the structural property of commutativity with an operationally-clear notion of classicality? The first step to doing so is to determine what `observables' are the appropriate ones to consider when checking commutativity. In particular, both the underlying dynamics in between measurements and the effects of general measurement instruments must be accounted for in the temporal setting. While this can be done by combining the chosen measurements with the dynamics, it is then not necessarily the commutativity of the bare measurements (i.e., pertaining to the measurement device itself) per se, but rather the \textit{effective} measurements (i.e., those with the dynamics accounted for) that render the observed statistics `classical' or `non-classical' accordingly. 

Here, we identify the operators that determine the non-invasiveness of measurements for the case of multiple sequential measurements with non-trivial intermediate dynamics and analyse the conditions for which the commutativity---or weaker versions thereof---of said operators corresponds to the satisfaction of the Kolmogorov consistency conditions (and vice versa). Our analysis thus connects structural with operational notions of classicality for multi-time dynamics and general measurement settings. For the special case of two sequential measurements without intermediate dynamics, our results coincide with those of L{\"u}ders. However, in general, the situation presents itself considerably more layered, and a `straightforward' extension of Lüders' results is not possible. We show that commutativity (of the relevant operators) is a stronger condition than Kolmogorov consistency in general; while the former implies the latter, the converse is not true. Additionally, we derive the conditions under which Kolmogorov consistency implies the vanishing of pertinent commutators in a restricted---but still multi-time---setting and highlight the ensuing physical implications in order to develop intuition concerning the interplay of these two notions of classicality. Finally, we relate our considerations to the well-known case of projective measurements in a fixed basis---where structural properties that are equivalent to Kolmogorov consistency have been identified~\cite{Smirne2019}---and show that, while said structural considerations follow directly from those we provide for more general measurement scenarios, even in this special case, it is difficult to identify generally applicable commutator relations. All throughout, we provide counter-examples to existing notions of classicality that serve to help build intuition and highlight the necessity of certain assumptions (e.g., on the types of measurements allowed).

Together, our results offer a comprehensive analysis regarding the connection of structural---yet not directly observable in many cases---properties of quantum dynamics and operational---i.e., experimentally accessible---notions of classicality, and underline the complicated interplay between dynamics and measurements that arises in the multi-time scenario. 

This article is organised as follows. We begin by outlining some preliminary concepts, including the considerations of L{\"u}ders~\cite{Luders2006} that motivate the examination of commutativity, and similarly for the Kolmogorov consistency conditions, in Sec.~\ref{sec:preliminaries}. We then explore the link between these two concepts within the setting of multi-time Markovian quantum dynamics throughout Sec.~\ref{sec:results}, where we first derive a commutator expression whose vanishing is sufficient to imply classical statistics, before deriving a necessary condition for classicality to imply vanishing commutators of the pertinent operators. We subsequently connect our work with the special case of dynamics that do not generate and detect coherences, which constitutes perhaps the most physically relevant special case~\cite{Smirne2019,Smirne2019b, Strasberg2019, Milz2019} that our results apply to. Finally, we conclude with a discussion and outlook in Sec.~\ref{sec:conclusion}.


\section{Preliminaries}\label{sec:preliminaries}

We begin by introducing the relevant concepts for both a structural and an operational definition of classicality in multi-time processes. To this end, first, we recall the connection between commutativity of observables and non-invasiveness in the two-time case, in particular the considerations of L{\"u}ders~\cite{Luders2006}.

\subsection{L{\"u}ders' Theorem: Commutativity and Non-invasiveness}

As a preliminary example concerning the connection between structural and operational notions of classicality, we consider the simplest case: The sequential measurement of two observables $A$ and $B$ on a state $\rho$ without intermediate evolution. Let $\{\Pi^{(a)}\}$ and $\{\Omega^{(b)}\}$ be projectors onto the eigenspaces of $A$ and $B$, respectively, with eigenvalues $\{a\}$ and $\{b\}$. The probability to first measure outcome $a$ and then $b$ is given by
\begin{gather}
    \Pprob(b,a) = \text{tr}(\Omega^{(b)} \Pi^{(a)} \rho \Pi^{(a)}) = \text{tr}(\rho \Pi^{(a)}\Omega^{(b)}\Pi^{(a)})\, .
\end{gather}
In classical physics, future statistics are unaffected by whether or not a previous measurement was conducted\footnote{Importantly, this independence of measurement statistics from previous outcomes generally only holds \textit{on average}, i.e., it does \textit{not} imply $\Pprob(b|a) = \Pprob(b)$, but merely $\Pprob(b) = \sum_a\Pprob(b,a)$.} (when that previous measurement outcome is not recorded, i.e., averaged over). Here, we employ the term `classical' in a somewhat colloquial sense, as pertaining to the macroscopic world; below, we properly define what we mean exactly by `classical' throughout this article. Consequently, if the above situation were classical, then 
\begin{gather}
\label{eqn::KolmoTriv}
    \sum_{a} \Pprob(b,a) = \Pprob(b,\cancel{a})
\end{gather}
would hold, where $\Pprob(b,\cancel{a})$ denotes the probability to obtain outcome $b$ if the first measurement was \textit{not} performed. Note that, crucially, the above equation compares two \emph{distinct} experiments: One in which the system is consecutively measured at \textit{two} times (yielding outcomes $a$ and $b$, l.h.s.); the other in which no measurement is made at the first time (indicated by the slash and yielding only outcome $b$ at the second time, r.h.s.). In quantum mechanics, Eq.~\eqref{eqn::KolmoTriv} (or generalisations thereof, see Sec.~\ref{sec::KolmCons}) fails to hold in general, since quantum measurements are invasive. As a consequence, not performing a measurement is distinguishable from measuring and averaging over outcomes. 

One example where Eq.~\eqref{eqn::KolmoTriv} \textit{can} be satisfied in the quantum setting, independently of $\rho$, is when $[\Pi^{(a)}, \Omega^{(b)}] = 0$ for all $a,b$ (which is equivalent to $[A,B] = 0$), since then $\sum_a \Pi^{(a)} \Omega^{(b)} \Pi^{(a)} = \Omega^{(b)}$ (where we have used $\Pi^{(a)} \Pi^{(a)} = \Pi^{(a)}$ and $\sum_a \Pi^{(a)} = \ident$), and thus 
\begin{gather}
\begin{split}
   \sum_a\Pprob(b,a) &= \sum_a \text{tr}(\Pi^{(a)} \Omega^{(b)} \Pi^{(a)} \rho) \\
   &= \text{tr}(\Omega^{(b)} \rho) =  \Pprob(b,\cancel{a})\, .
 \end{split}
\end{gather}
If the above is satisfied, then---just like in classical physics---all information is contained in the two-point probability distribution $\Pprob(b,a)$ in the sense that both single-time distributions $\Pprob(a)$ and $\Pprob(b)$ can be obtained from it by marginalisation; the process is thereby fully characterised. Consequently, throughout this article, we call \emph{classical} those experimental situations that satisfy this property, i.e., that yield probabilities which can all be obtained from a single multi-time probability distribution via marginalisation (see Sec.~\ref{sec::KolmCons} for a rigorous discussion). In this case, we say that said probability distributions for different subsets of times are \emph{consistent}. Importantly, this notion of classicality amounts to measurement non-invasiveness: Whether or not a measurement has been performed at a given time has no bearing on the outcome probabilities at different times if the statistics are classical. Thus, commutativity of observables $A$ and $B$ in a two-point measurement scheme implies classicality of the statistics (the converse is also true, but not obvious, see below), establishing a direct link between a structural notion (commutativity of operators) of classicality and an operational definition (measurement non-invasiveness) thereof. 

More generally, instead of performing sharp measurements of an observable, an experimenter could first probe the state $\rho$ with a general instrument, described by a set of Kraus operators $\{K^{(a)}\}$, where each $K^{(a)}$ corresponds to a measurement outcome $a$, and subsequently perform a POVM $\{Q^{(b)}\}$, each element of which corresponds to an outcome $b$. The two-point statistics are then given by 
\begin{align}
\label{eqn::Two_point}
    \Pprob(b,a) &= \text{tr}(Q^{(b)} K^{(a)} \rho K^{(a)\dagger}) = \text{tr}(\rho K^{(a)\dagger} Q^{(b)} K^{(a)}) \notag \\
    &=:  \text{tr}(\rho \Kcal^{(a)\dagger} [Q^{(b)}])\, .
\end{align}
Here, $K^{(a)}$ and $Q^{(b)}$ respectively play analogous roles to $\Pi^{(a)}$ and $\Omega^{(b)}$ in the previous example. Now, setting $\Kcal^\dagger[\bullet] : = \sum_a \Kcal^{(a)\dagger}[\bullet]$, we see that non-invasiveness of the first measurement amounts to the satisfaction of
\begin{gather}\label{eq:luedersnoninvasiveness}
    \text{tr}(\rho \Kcal^\dagger[Q^{(b)}]) = \text{tr}(\rho \, Q^{(b)}) \quad \forall \, b\, .
\end{gather}
If the above must hold for \textit{all} states $\rho$, then non-invasiveness is equivalent to 
\begin{gather}
\Kcal^\dagger[Q^{(b)}] = Q^{(b)} \quad \forall \, b\, .
\end{gather}
This criterion has been connected to commutation relations by L\"uders, yielding the following theorem:
\begin{thm}[L\"uders~\cite{Luders2006, Busch1998}]
\label{thm::Luders}
Let $\Kcal^\dagger$ be defined as above and let $Q$ be a positive semi-definite operator. If all Kraus operators $K^{(a)}$ are Hermitian, then $\mathcal{K}^\dagger[Q] = Q$ is equivalent to $[K^{(a)}, Q] = 0 \ \forall \, a$.
\end{thm}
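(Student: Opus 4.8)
The plan is to prove the two implications separately, and the first observation to set up is that Hermiticity of the Kraus operators together with the completeness relation $\sum_a K^{(a)\dagger} K^{(a)} = \id$ makes $\mathcal{K}^\dagger[\bullet] = \sum_a K^{(a)} \bullet \, K^{(a)}$ a \emph{unital} (and completely positive) map, since $\mathcal{K}^\dagger[\id] = \sum_a (K^{(a)})^2 = \id$. Given this, the forward (``easy'') direction is immediate: if $[K^{(a)}, Q] = 0$ for all $a$, then $K^{(a)} Q K^{(a)} = Q (K^{(a)})^2$, and summing over $a$ while invoking unitality yields $\mathcal{K}^\dagger[Q] = Q \sum_a (K^{(a)})^2 = Q$.

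The substantial direction is the converse, and I would base it on a kernel-invariance lemma. Assuming $\mathcal{K}^\dagger[Q] = Q$ with $Q \geq 0$, I would take any $\psi \in \ker Q$ and compute $0 = \langle \psi, Q \psi\rangle = \langle \psi, \mathcal{K}^\dagger[Q] \psi\rangle = \sum_a \langle K^{(a)} \psi, Q\, K^{(a)} \psi\rangle$. Since $Q \geq 0$, every summand is non-negative, so each must vanish individually; positivity of $Q$ then forces $Q^{1/2} K^{(a)} \psi = 0$, i.e.\ $K^{(a)} \psi \in \ker Q$. Hence each $K^{(a)}$ leaves $\ker Q$ invariant. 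This is the step where the hypothesis $Q \geq 0$ is used crucially.

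Next I would leverage Hermiticity to upgrade invariance to genuine commutation. The trick is to apply the lemma not to $Q$ itself but to $Q - q_{\min} \id \geq 0$, which is still a fixed point of the unital map $\mathcal{K}^\dagger$ and whose kernel is exactly the eigenspace of the smallest eigenvalue $q_{\min}$, with spectral projector $P_{\min}$. The lemma then gives $K^{(a)} P_{\min} = P_{\min} K^{(a)} P_{\min}$; taking adjoints, using that both $K^{(a)}$ and $P_{\min}$ are Hermitian, yields $P_{\min} K^{(a)} = P_{\min} K^{(a)} P_{\min}$, and combining the two relations gives $[K^{(a)}, P_{\min}] = 0$. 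This is where Hermiticity is indispensable: without it one obtains only one-sided invariance, not commutation.

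Finally, an induction over the spectrum closes the argument. Since $[K^{(a)}, P_{\min}] = 0$, the operators $K^{(a)}$ block-diagonalise with respect to $P_{\min}$, so I would restrict to the complementary subspace $(\id - P_{\min})\mathcal{H}$, on which $K^{(a)}(\id - P_{\min})$ again form a Hermitian, unital Kraus family fixing the restriction of $Q$; repeating the previous two paragraphs peels off one spectral projector at a time and yields $[K^{(a)}, P_j] = 0$ for every spectral projector of $Q$, whence $[K^{(a)}, Q] = [K^{(a)}, \sum_j q_j P_j] = 0$. The main obstacle is this converse direction: concretely, making the two distinct uses of Hermiticity explicit (once for unitality, once for converting invariance into commutation) and verifying that the inductive descent preserves unitality and complete positivity of the restricted map on each successive subspace.
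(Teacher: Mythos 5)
Your proof is correct and is essentially the paper's argument run from the other end of the spectrum: where the paper uses $\lambda_1\id - Q \geqslant 0$ to show each $K^{(a)}$ leaves the top eigenspace invariant and then iteratively subtracts $\lambda_\mu P_\mu$, you apply the equivalent kernel-invariance observation to $Q - q_{\min}\id \geqslant 0$ and restrict to the orthocomplement, with both arguments using Hermiticity in exactly the same way to upgrade the one-sided relation $K^{(a)}P = PK^{(a)}P$ to $[K^{(a)},P]=0$ before inducting over the spectral projectors. Your explicit appeal to unitality ($\mathcal{K}^\dagger[\id]=\id$) to verify that the shifted operator remains a fixed point is a detail the paper leaves implicit (its own iteration likewise needs $\mathcal{K}^\dagger[P_\mu]=P_\mu$, which follows from $[K^{(a)},P_\mu]=0$ together with $\sum_a (K^{(a)})^2 = \id$), so your write-up is marginally more careful on that point.
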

Since some of our proofs below follow a similar line of reasoning, we recall the proof of this theorem from the literature (see e.g. Ref.~\cite{Busch1998}).
\begin{proof}
First, it is easy to see that $[K^{(a)}, Q] = 0 \ \forall \, a$ implies $\mathcal{K}^\dagger[Q] = Q$. To see the converse, let $\ket{\varphi}$ be an arbitrary vector in the Hilbert space $\mathcal{H}$ that $Q$ is defined on. Suppose that $\mathcal{K}^\dagger\cvb{Q}=Q,$ and decompose $Q=\sum_\mu\lambda_\mu P_\mu$ with $\lambda_1 > \lambda_2 > \ldots$ with $\{P_\mu\}$ being mutually orthogonal projection operators. It follows that
	\begin{align}\label{eq:LuederOriginal}
		\lambda_1\cvV{P_1\ket{\varphi}}^2 &= \braket{P_1\varphi|QP_1\varphi} \notag \\
			&= \braket{P_1\varphi|\mathcal{K}^\dagger\cvb{Q}P_1\varphi}\notag \\
			&= \sum_a \braket{K^{(a)}P_1\varphi|Q|K^{(a)}P_1\varphi}\notag \\
			&\leqslant \lambda_1 \sum_a \braket{K^{(a)}P_1\varphi|K^{(a)}P_1\varphi}\notag \\
			&=  \lambda_1 \sum_a \braket{K^{(a)\dagger} K^{(a)} P_1\varphi|P_1\varphi} \notag \\
			&=\lambda_1\cvV{P_1\ket{\varphi}}^2 \, , 
	\end{align}
where we have used $\lambda_1\mathbbm{1} - Q \geqslant 0$ for the first inequality and $\sum_a K^{(a)\dagger} K^{(a)} = \ident$ for the last equality. From the above, we see that 
	\begin{align}
		&\braket{K^{(a)}P_1\varphi|(\lambda_1\iden-Q)|K^{(a)}P_1\varphi}  \nonumber \\
		&\phantom{asdfasdfasdf}= \cvV{\left(\lambda_1\iden-Q\right)^{1/2}K^{(a)}P_1\ket{\varphi}}^2
		= 0\label{eq:Luder-train_identity}
	\end{align}
holds, and we thus have
	\begin{equation}
		QK^{(a)}P_1\ket{\varphi} = \lambda_1K^{(a)}P_1\ket{\varphi},\label{eq:Luder-train_Km-invariant}
	\end{equation}
implying that $K^{(a)}$ leaves the $\lambda_1-$eigensubspace invariant. This means that $K^{(a)}P_1=P_1K^{(a)}P_1,$ and since $K^{(a)}$ is assumed to be Hermitian, it follows that $\cvb{K^{(a)},P_1}=0$ for all $a.$ Now, we set $Q_\mu=Q_{\mu-1}-\lambda_{\mu-1} P_{\mu-1}$ and $Q_0=Q,$ and repeat the same steps as above with $Q_2$ and $P_2$ and so on. This iteration then leads to the fact that $\cvb{K^{(a)},P_\mu}=0$ for all $a$ and $\mu.$ Hence $\cvb{K^{(a)},Q}=0$ as claimed.
\end{proof}
Since $\mathcal{K}^\dagger[Q^{(b)}] = Q^{(b)}$ is equivalent to non-invasiveness of the first measurement, Thm.~\ref{thm::Luders} says that non-invasiveness for arbitrary initial states $\rho$ is equivalent to commutativity of the Kraus operators of the first measurement and the POVM elements of the second one, if \textit{all} Kraus operators are Hermitian (which is, e.g., the case for projective measurements of two observables).

We emphasise that Hermiticity of the Kraus operators is crucial for the derivation of Thm.~\ref{thm::Luders}, and without this assumption, it no longer holds in general. This can be seen by considering a counterexample provided in Ref.~\cite{Heinosaari2010}: \\ \\
\begin{example} 
\label{eqn::ExNonHerm}
Let the POVM elements $\{Q^{(1)},Q^{(2)}\}$ be given by 
	\begin{align*}
	Q^{(1)} &= \dfrac{1}{2}\left(\begin{array}{ccc}
		2 & 0 & 0 \\ 
		0 & 0 & 0 \\ 
		0 & 0 & 1
		\end{array}  \right)\!,~Q^{(2)} = \dfrac{1}{2}\left(\begin{array}{ccc}
		0 & 0 & 0 \\ 
		0 & 2 & 0 \\ 
		0 & 0 & 1
		\end{array}  \right),
	\end{align*}
and the Kraus operators by
	\begin{align*}
	K^{(1)}\!&=\!\dfrac{1}{2}\!\left(\begin{array}{ccc}
		\sqrt{2} & 0 & -1 \\ 
		0 & 0 & 0 \\ 
		0 & 0 & 0
		\end{array}  \right)\!,K^{(2)}\!=\!\dfrac{1}{10}\!\left(\begin{array}{ccc}
		0 & 0 & 0 \\ 
		0 & -\sqrt{10} & 2\sqrt{10} \\ 
		0 & 0 & 0
		\end{array}  \right)\!,\\
	K^{(3)} &= \dfrac{1}{2}\left(\begin{array}{ccc}
		0 & 0 & 0 \\ 
		0 & \sqrt{2} & 0 \\ 
		0 & 0 & 0
		\end{array}  \right)\!,~K^{(4)} = \dfrac{1}{20}\left(\begin{array}{ccc}
		0 & 0 & 0 \\ 
		0 & 4\sqrt{10} & 2\sqrt{10} \\ 
		0 & 0 & 0
		\end{array}  \right)\!,\\
	K^{(5)} &= \dfrac{1}{2}\left(\begin{array}{ccc}
		\sqrt{2} & 0 & 1 \\ 
		0 & 0 & 0 \\ 
		0 & 0 & 0
		\end{array}  \right).
	\end{align*}
For this setup, it is straightforward to see that both $Q^{(1)}$ and $Q^{(2)}$ are invariant under $\sum_{a=1}^5K^{(a)\dagger}\bullet K^{(a)}$, i.e., the measurement is non-invasive overall and therefore the resulting statistics are classical, but, for example, $ \cvb{K^{(5)},Q^{(1)}} \neq 0$. Consequently, classicality of statistics and commutativity of the Kraus operators are generally inequivalent notions. Nonetheless, the direct connection between commutativity and measurement non-invasiveness in quantum mechanics, i.e., that provided by L{\"u}ders' theorem, has subsequently been extended to more general (e.g., non-Hermitian) Kraus operators in certain circumstances~\cite{Busch1998, Arias2002, Heinosaari2010}.\hfill $\blacksquare$
\end{example} 

Importantly for our purposes, Thm.~\ref{thm::Luders} identifies the relevant operators whose commutation relations are related to non-invasiveness. As a first step, in what follows we will investigate which operators play the roles of $K^{(a)}$ and $Q^{(b)}$ in the multi-time---i.e., more than two consecutive measurements---case. 

Before doing so, we first note that Thm.~\ref{thm::Luders} (and its extensions) are restricted in their realm of application. Firstly, they are limited to only two sequential measurements with no intermediate evolution.\footnote{This is, technically, not a strong restriction, since any intermediate dynamics could be absorbed into the Kraus operators/POVM elements.} Additionally, the (potential) equivalence between $\Kcal^\dagger[Q^{(b)}] = Q^{(b)}$ and classicality requires non-invasiveness for \textit{all} states $\rho$. As we will discuss, in the multi-time scenario, one is not always guaranteed to have access to a full basis of quantum states at each interrogation time of interest. Consequently, in the multi-time case, the relation between compatible statistics and the commutativity of some appropriate operators presents itself as a more layered issue than in the static or two-time cases, even when the Kraus operators of performed measurements are limited in a similar way to the assumptions of Thm.~\ref{thm::Luders}. 

We now move to consider the operational notion of non-invasiveness in the multi-time case, namely the \textit{Kolmogorov consistency conditions}, which are naturally suited to analysing the classicality (or not) of general quantum processes. We will see that L{\"u}ders' considerations amount to a special case of Kolmogorov consistency, before moving on to develop multi-time `L{\"u}ders-type' theorems, in the sense that they connect non-invasiveness of measurements to the vanishing of a set of pertinent commutator expressions.

\FloatBarrier

\subsection{Kolmogorov Consistency and Non-invasiveness}
\label{sec::KolmCons}

In Eq.~\eqref{eqn::KolmoTriv}, we provided an experimentally accessible notion of non-invasiveness---and thus classicality---for two sequential measurements. The natural way to extend this definition to the multi-time case is as follows. Suppose that an agent probes a physical system at $n$ discrete points in time, recording the corresponding joint probability distribution $\mathbbm{P}(m_n, \hdots, m_1)$ over possible outcomes $\{ m_n, \hdots, m_1 \}$ observed at the respective (strictly decreasing) times $\{ t_n, \hdots, t_1 \}$ [see Fig.~\ref{fig::Kolmo1}] 
Note that, as a part of the definition, we consider the sequence of times at which the system is measured as fixed and given, and throughout we use the shorthand subscript notation $m_{i}$ to denote a measurement at time $t_i$ with outcome $m_i$. Importantly, analogous to the two-time case discussed above, for any classical stochastic process, the recorded probability distribution is guaranteed to satisfy the \emph{Kolmogorov consistency conditions}~\cite{Kolmogorov_1956}, illustrated in Figs.~\ref{fig::Kolmo2} and~\ref{fig::Kolmo3}: 
\begin{align}\label{eq:kolmogorovconsistency}
    \mathbbm{P}(m_n, &\hdots, \cancel{m_i}, \hdots, m_1) \notag \\* &= \sum_{m_i} \mathbbm{P}(m_n, \hdots, m_i, \hdots, m_1) \quad \forall \, i\, ,
\end{align}
where, again, the `crossed out' outcome notation indicates that \textit{no} measurement was performed at the corresponding time ($t_i$ in the above equation).
Evidently, Eq.~\eqref{eq:kolmogorovconsistency} is a multi-time generalisation of the two-time scenario considered in Eq.~\eqref{eqn::KolmoTriv}. The distribution on the l.h.s. corresponds to what an experimenter observes if they do \textit{not} perform any interrogation at time $t_i$, whereas on the r.h.s. the full statistics is recorded and then marginalised over at time $t_i$. As in the two-time case, Kolmogorov consistency states that there is no difference between not having performed a measurement and measuring but averaging over all possible outcomes at any time, corresponding to a sensible notion of classicality in terms of measurement non-invasiveness: For classical stochastic processes, measurements simply reveal a pre-existing property of the system, in line with the assumptions of macroscopic realism used, e.g., in the derivation of Leggett-Garg inequalities~\cite{Leggett1985}. This property fails to hold for quantum processes, since quantum measurements generically alter the state of the system being measured~\cite{Piron1981}.

\begin{figure}[t]
\centering
 \subfigure[~]
 {
  \includegraphics[width=0.95\linewidth]{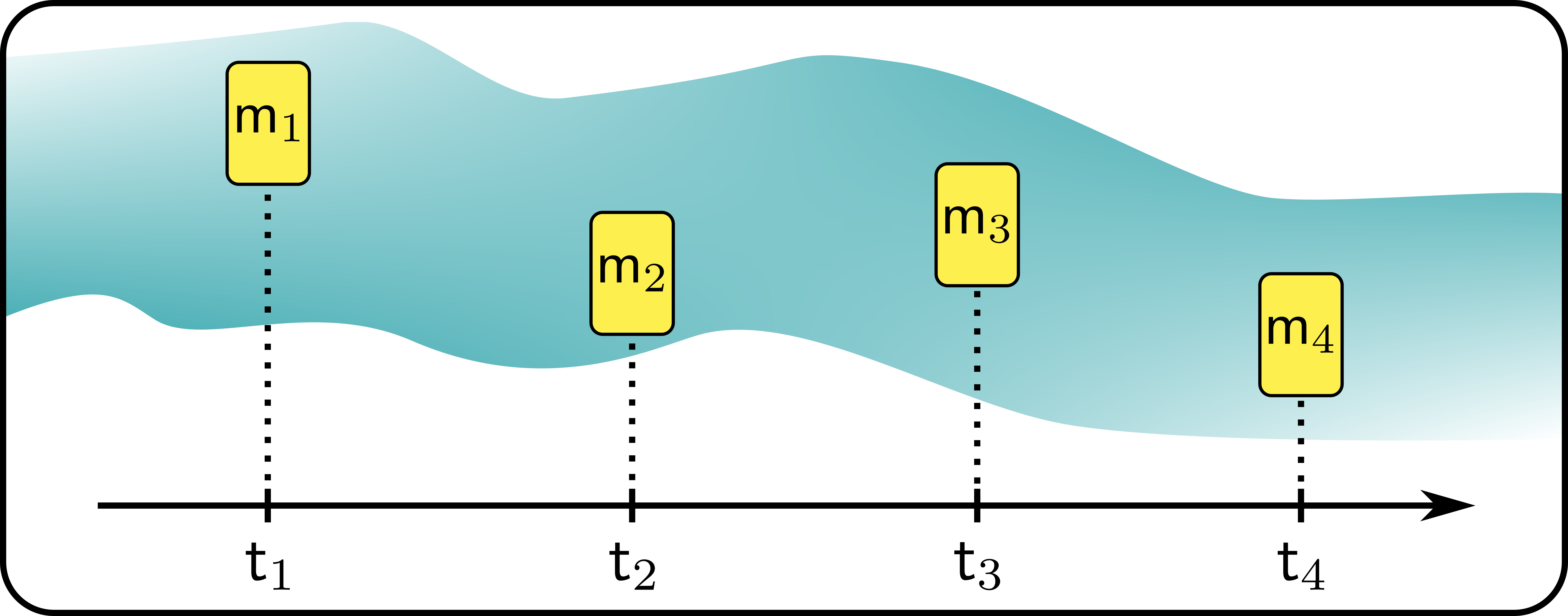}
 \label{fig::Kolmo1}}\\
\subfigure[~] 
{ 
\includegraphics[width=0.95\linewidth]{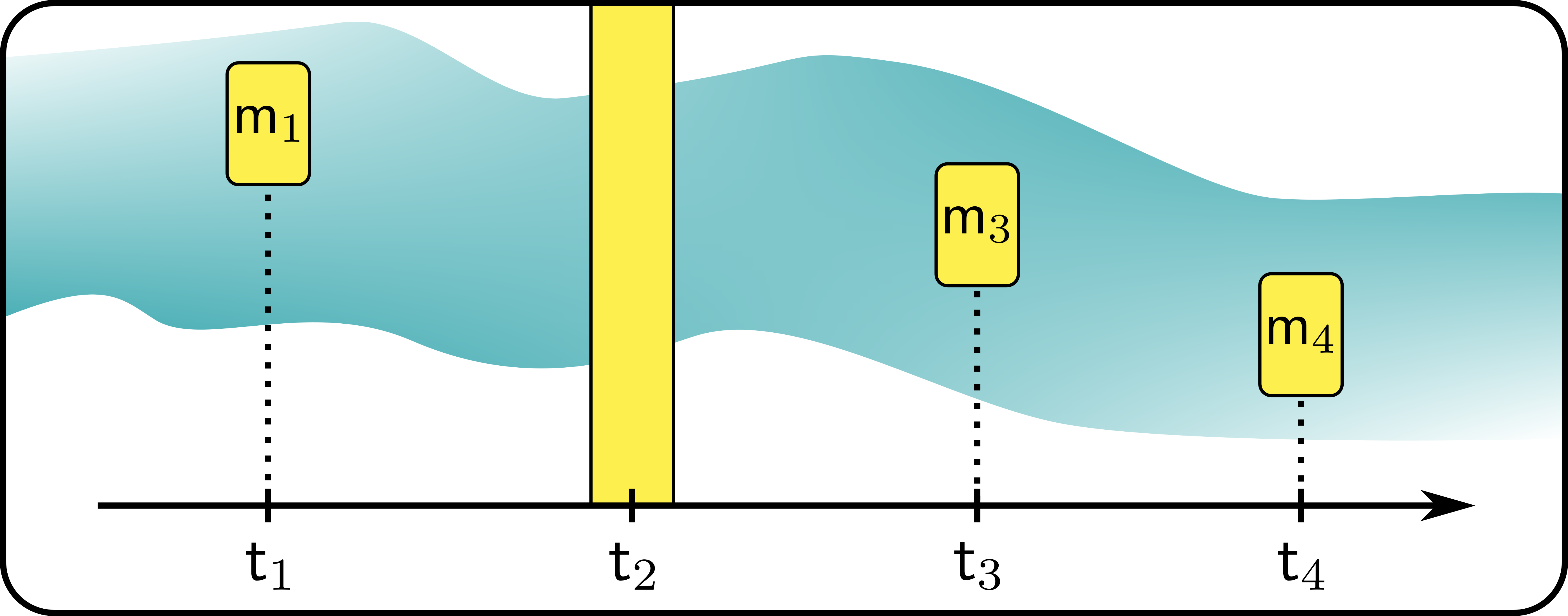}
\label{fig::Kolmo2}}
\\
\subfigure[~] 
{ 
\includegraphics[width=0.95\linewidth]{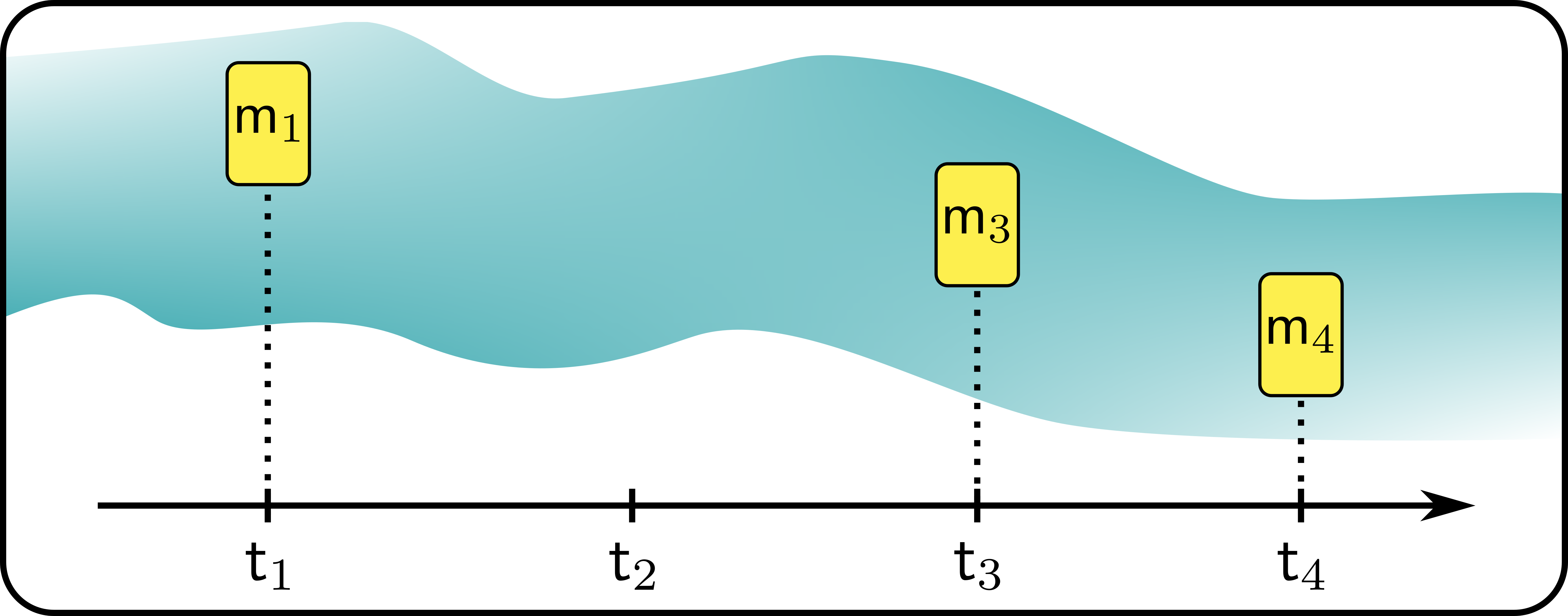}
\label{fig::Kolmo3}}\vspace{-1em}
\caption{\textbf{Schematics of physical processes considered in this article.} An unknown process (turquoise) is probed at different times, here $\{t_1,t_2,t_3,t_4\}$, and the resulting joint probability distribution $\mathbbm{P}(m_4,m_3,m_2,m_1)$ is recorded \textbf{(a)}. From the four-time distribution obtained, one could compute a three-time joint distribution by marginalising over the outcomes at a given time \textbf{(b)}: We depict the case in which the outcomes at time $t_2$ are marginalised over, yielding the probability distribution $\sum_{m_2} \mathbbm{P}(m_4,m_3,m_2,m_1)$ for times $t_1, t_3$, and $t_4$. In contrast, the three-time joint probability $\Pprob(m_4,m_3,\cancel{m_2},m_1),$ as shown in \textbf{(c)}, is obtained by \textit{not} performing a measurement at $t_2$. In general, this is a different experiment than the one of \textbf{(b)} and therefore leads to a different probability distribution. However, if the process is classical---and the employed measurements are non-invasive---then the joint probabilities of the experiments depicted in \textbf{(b)} and \textbf{(c)} coincide [see Eq.~\eqref{eq:kolmogorovconsistency}].}\vspace{-1em}
\end{figure}

We emphasise that a breakdown of Kolmogorov consistency does not necessarily imply that the probed process at hand is non-classical per se. For instance, in the theory of classical causal modelling~\cite{Pearl}, where invasive interrogations can be implemented (e.g., by first measuring the value of some property and then setting it to some other value) in order to potentially infer causal influence, the recorded statistics generally do not satisfy the Kolmogorov consistency conditions~\cite{Milz2020}. Nonetheless, testing the validity of Eq.~\eqref{eq:kolmogorovconsistency} provides a theory-independent, operational procedure to decide on the non-invasiveness of interrogations. In particular, satisfaction of the Kolmogorov consistency conditions implies that there exists a---potentially exotic---classical stochastic process that can reproduce the observed statistics. To do so, said classical stochastic process merely needs to correctly recreate the full joint probability distribution $\Pprob(m_n, \dots, m_1)$, and, due to satisfaction of the Kolmogorov conditions, it then also correctly recreates all joint probability distributions for any subset of times, thereby fully characterising the process from an operational standpoint. Thus, we will interchangeably use the terms `Kolmogorov consistency', `measurement non-invasiveness' and `classicality'.

Recently, in Refs.~\cite{Smirne2019b, Smirne2019, Strasberg2019, Milz2019}, the implications of the satisfaction of the Kolmogorov consistency conditions for general multi-time processes (including those with memory) that are probed by means of pure projective measurements have been characterised, thus connecting the operational, experimentally accessible notion of classicality with certain properties of the underlying quantum dynamics, namely their ability to generate and detect coherence or discord with respect to the chosen measurement basis (we discuss the relationship of these results with our present work in Sec.~\ref{sec::NCGD}). Here, we allow for \emph{general measurements} and phrase our results in terms of commutation relations (i.e., in the spirit of L\"uders' theorem), rather than in terms of the coherence- or discord-related properties of the underlying quantum maps that engender the observed statistics.

\FloatBarrier

\subsection{Multi-time Statistics from (Markovian) Quantum Processes}
\label{subsec:seq-proc}

To make the relation between commutation relations and non-invasiveness in quantum theory more concrete and identify the relevant operators, we now examine how observed statistics are related to the underlying dynamics of a quantum process. In order to collect joint statistics at times $t_1, \dots, t_n$, at each time $t_i$ an experimenter probes the system of interest with an \emph{instrument}, $\mathcal{J}_i = \{\mathcal{K}_i^{(m_i)}\}$, which is a collection of completely-positive \textbf{(CP)} maps that sum up to a completely-positive and trace-preserving \textbf{(CPTP)} map, i.e., $\mathcal{K}_i := \sum_{m_i} \mathcal{K}_i^{(m_i)}$ is a CPTP map~\cite{Lindblad1979}. Each CP map $\mathcal{K}_i^{(m_i)}$ corresponds to a possible outcome $m_i$ and captures the state change of the system upon measurement. For simplicity, we assume that every element $\mathcal{K}_i^{(m_i)}$ can be represented by a single Kraus operator, i.e., $\mathcal{K}_i^{(m_i)}[\rho] = K_i^{(m_i)} \rho K_i^{(m_i)\dagger}$. In between these measurements (e.g., between $t_i$ and $t_{i+1}$), the system of interest undergoes non-trivial dynamics, possibly interacting with an environment, described by CPTP maps $\Lambda_{i+1:i}$. 
While, in principle, both the dynamics of the system, as well as the measurements themselves could be correlated in time, throughout, we focus on \textit{Markovian} (i.e., memoryless) dynamics and uncorrelated measurements. Consequently, the only possible correlations of the observed measurement statistics are then caused by the interplay between the the measurement maps and the dynamical maps (as we will elaborate upon below).\footnote{These assumption may not be fulfilled in the real experiments, and going beyond them leads to interesting phenomena. For instance, conducting two consecutive measurements with the same detector might lead to statistical correlations between the two outcomes beyond what arise from the measured process itself \cite{Lyons1988,valassi2003}; on the other hand, additional memory in the process can, e.g., result in the violation of Leggett-Garg type inequalities in classical processes~\cite{Montina2012}. Our assumptions amount to the---physically reasonable---case of independent measurement devices and a temporal spacing of measurements that is much larger than the typical timescales at which memory effects decay.}

Concretely, assuming the dynamics to be Markovian and the measurements uncorrelated, then the maps $\Lambda_{i+1:i}$ are mutually independent and act on the system alone (see Fig.~\ref{fig::Markov}). Any statistics observed by probing (in an uncorrelated manner) a Markovian process can be computed via the \emph{quantum regression formula}~\cite{Lax1968,Carmichael1993,Breuer_book}: 
\begin{align}\label{eq:qrf}
    \mathbbm{P}&(m_n,\hdots,m_1|\Jcal_n, \hdots, \Jcal_1) = \notag \\
    &\mathrm{tr}\left(\Kcal_n^{(m_n)} \circ \Lambda_{n:n-1} \circ \dots \circ \Lambda_{2:1} \circ \Kcal_1^{(m_1)}[\rho]\right)\, ,
\end{align}
where all maps act on the system alone. It is important to note that the statistics observed depend on \textit{both} the CP maps $\Kcal_i^{(m_i)}$ implemented by the experimenter and the generally uncontrollable dynamics of the process given by the CPTP maps $\{\Lambda_{i:i-1}\}$. Whether or not the measured statistics satisfies the Kolmogorov consistency conditions thus depends on the complex interplay between measurements \textit{and} intermediate dynamics. With Eq.~\eqref{eq:qrf} at hand, we can now identify the relevant operators and commutation relations concerning the satisfaction of Kolmogorov conditions.

\begin{figure}[t]
    \centering
    \includegraphics[width=0.95\linewidth]{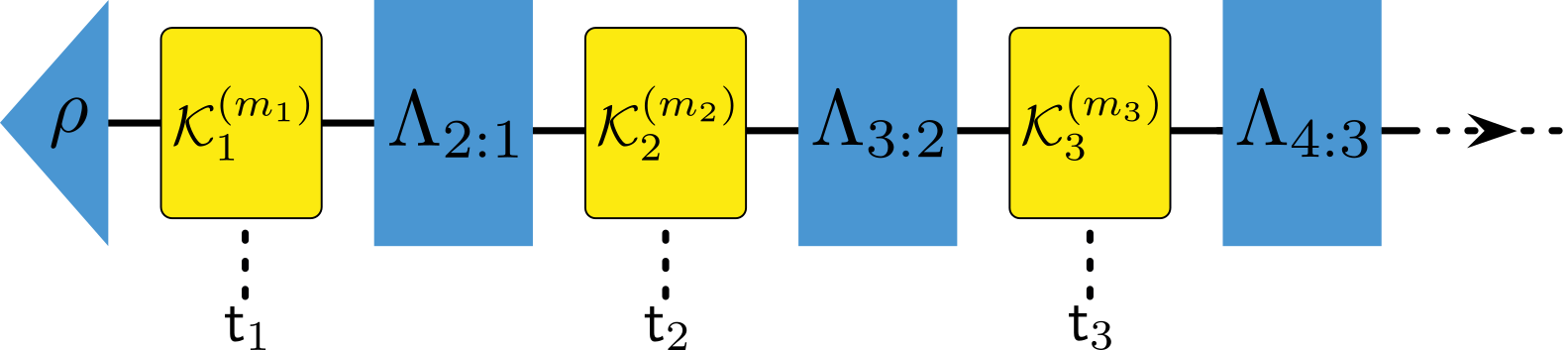}
    \caption{\textbf{Multi-time probing of a Markovian quantum process.} A quantum process without memory on a discrete set of times $t_1, \hdots, t_n$ can be  described by an initial state $\rho$ of the system and a collection of independent CPTP maps $\{\Lambda_{i:i-1}\}$ that act on the system alone between neighbouring times (blue). At each time $t_i$, we envisage an agent probing the process and observing a measurement outcome $m_i$, with the post-measurement state feeding forward (yellow). Such a probing is represented by a CP map $\Kcal_{i}^{(m_i)}$ at each time.}
    \label{fig::Markov}
\end{figure}

\section{Multi-time Dynamics: Kolmogorov Consistency and Commutativity}\label{sec:results}

To identify the relevant commutation relations, let us rewrite Eq.~\eqref{eq:qrf} entirely in terms of Kraus operators: 
\begin{align}
\label{eq:qrf_Kraus}
&\mathbbm{P}(m_n,\hdots,m_1|\Jcal_n, \hdots, \Jcal_1) \notag \\*
&= \sum_{\ell_2\dots \ell_n}\text{tr}\left[ K_n^{(m_n)} L^{(\ell_n)}_{n:n-1}  \cdots L^{(\ell_2)}_{2:1} K_1^{(m_1)} \rho  K_1^{(m_1)\dagger} L^{(\ell_2)\dagger}_{2:1} \cdots\right. \notag \\*
&\phantom{= \sum_{\ell_2\dots \ell_n}}\cdots  \left.L^{(\ell_n)\dagger}_{n:n-1} \Kcal_n^{(m_n)\dagger}\right]\, , 
\end{align}
where we have set $\Lambda_{i:i-1}[\bullet] = \sum_{\ell_i} L_{i:i-1}^{(\ell_i)} \bullet L_{i:i-1}^{(\ell_i)\dagger}$. 

In order to connect this equation to non-invasiveness of a measurement at a time $t_i$, we split the above expression into three parts (using the cyclicity of the trace): One that corresponds to the state immediately prior to the measurement, one corresponding to the measurement itself, and one corresponding to everything that happens after the measurement at said time of interest. Specifically, setting $R_{i:i-1}^{(\ell_i,m_{i-1})}:= L^{(\ell_i)}_{i:i-1} K_{i-1}^{(m_{i-1})}$, we see that the pre-measurement (subnormalised) state at time $t_i$ is given by 
\begin{gather}
\begin{split}
    &\widetilde \rho_i(\mathbf{m}_{i-1:1}) \\
    &:= \sum_{\ell_2\dots \ell_i} R^{(\ell_i,m_{i-1})}_{i:i-1} \cdots R^{(\ell_2,m_1)}_{2:1} \rho  R^{(\ell_2,m_1)\dagger}_{2:1} \cdots R^{(\ell_i,m_{i-1})\dagger}_{i:i-1}\, 
\end{split}
\end{gather}
for $i\geqslant 2$ (with $\widetilde{\rho}_1:=\rho$). Note that $\widetilde \rho_i(\mathbf{m}_{i-1:1})$ depends upon all measurement outcomes $\mathbf{m}_{i-1:1} := (m_{i-1}, \dots, m_1)$ up to $t_i$, and its trace corresponds to the probability to observe said sequence of outcomes, i.e., 
\begin{gather}
    \Pprob(m_{i-1}, \dots, m_{1}| \Jcal_{m-1}, \dots, \Jcal_1) =  \text{tr}[\widetilde \rho_i(\mathbf{m}_{i-1:1})]\,.
\end{gather}
On the other hand, grouping all post-measurement operators together, we can define the positive semi-definite operator
\begin{align}
    &Q_{i}(\mathbf{m}_{n:i+1}) \notag \\*
    &:= \sum_{\ell_{i+1}\dots \ell_n} L_{i+1:i}^{(\ell_{i+1})\dagger} R_{i+2:i+1}^{(\ell_{i+2}, m_{i+1})\dagger} \cdots R_{n:n-1}^{(\ell_n,m_{n-1})\dagger} K_{n}^{(m_n)\dagger} \notag
    \\* &\phantom{:= } \times K_{n}^{(m_n)} R_{n:n-1}^{(\ell_n,m_{n-1})} \cdots  R_{i+2:i+1}^{(\ell_{i+2}, m_{i+1})} L_{i+1:i}^{(\ell_{i+1})}\, .
\end{align}
With this, the multi-time statistics of Eq.~\eqref{eq:qrf_Kraus} can be expressed succinctly as 
\begin{align}
    &\mathbbm{P}(m_n,\hdots,m_1|\Jcal_n, \hdots, \Jcal_1) \notag \\
    &= \text{tr}\left[\widetilde \rho_i(\mathbf{m}_{i-1:1}) K^{(m_i)\dagger}_i Q_{i}(\mathbf{m}_{n:i+1})  K^{(m_i)}_i \right] \notag \\
    &=: \text{tr}\left[\widetilde \rho_i(\mathbf{m}_{i-1:1}) \Kcal^{(m_i)\dagger}_i [Q_{i}(\mathbf{m}_{n:i+1})]\right] \, .
\end{align}
Intuitively, $\widetilde \rho_i$ is the time-evolved (subnormalised) state that is to be measured at time $t_i$, while $Q_{i}$ corresponds to the effect of each measurement outcome after $t_i$ (all the way up until some fixed final time $t_n$), with the dynamics of the process in between accounted for, `rolled back' in time to $t_i$, when the measurement described by $\Kcal^{(m_i)\dagger}_i$ occurs.\footnote{For a slightly different perspective on how one could model sequential measurements, see, e.g., Ref.~\cite{morgan_collapse_2022}.} With this, setting $\Kcal_i^\dagger[\bullet]:= \sum_{m_i} \Kcal^{(m_i)\dagger}_i[\bullet]$, Kolmogorov consistency is equivalent to 
\begin{gather}
\begin{split}
\label{eqn::LudersMulti}
    \, \text{tr}\left[\widetilde \rho_i(\mathbf{m}_{i-1:1}) \Kcal_i^\dagger[Q_{i}(\mathbf{m}_{n:i+1})]\right]  \\
    =\, \text{tr}\left[\widetilde \rho_i(\mathbf{m}_{i-1:1}) Q_{i}(\mathbf{m}_{n:i+1})\right], 
\end{split}
\end{gather}
which can be expressed as a commutator expression via
\begin{equation}
\label{eqn::LudersMulti-comm}
    \sum_{m_i}\text{tr}\left[\widetilde \rho_i(\mathbf{m}_{i-1:1}) K_i^{(m_i) \dagger}[K_i^{(m_i)},Q_{i}(\mathbf{m}_{n:i+1})]\right]=0 
\end{equation}
for all $t_i$ and all outcomes $\{m_1, \dots, m_{i-1}, m_{i+1}, \dots, m_n\}$. Formally, apart from the dependence on past and future outcomes, the above equation coincides with Eq.~\eqref{eq:luedersnoninvasiveness}, which states the Kolmogorov consistency conditions for two sequential measurements without intermediate dynamics. This seemingly implies that there should be a direct relation between commutation relations of the involved operators in Eq.~\eqref{eqn::LudersMulti-comm} and classicality of the observed statistics. However, we now discuss some important differences between classicality for two-time vis-a-vis multi-time processes and subsequently demonstrate that there is no `straightforward' extension of L{\"u}ders' theorem to the multi-time setting, except under rather restrictive assumptions.

\subsection{Two-time vs. Multi-time Classicality}

There are a number of major differences between the two- and multi-time scenario. Firstly, in the two-time case, assuming that measurement non-invasiveness holds for arbitrary initial states $\rho$, one can conclude that satisfaction of the Kolmogorov consistency conditions is equivalent to $\Kcal^\dagger[Q^{(b)}] = Q^{(b)}$. On the other hand, in the multi-time case, even when assuming that the Kolmogorov consistency conditions hold for arbitrary initial states (i.e., those at time $t_1$), one is no longer guaranteed that the system states span a full basis at every later time $t_i$. To see this, consider the natural case of measurements being performed in the computational basis, i.e., $\Kcal_{i-1}^{(m_{i-1})}[\rho] = \braket{m_{i-1}|\rho|m_{i-1}}\ketbra{m_{i-1}}{m_{i-1}}$. Then, the state $\widetilde{\rho}_i$ immediately prior to the measurement at time $t_i$---independent of that at the beginning of the experiment---is proportional to $\Lambda_{i:i-1}[\ketbra{m_{i-1}}{m_{i-1}}]$ and so the set of states $\{\widetilde \rho_i\}$ can at most span a $d$ dimensional space, which cannot coincide with the $d^2$ dimensional space spanned by all quantum states. Consequently, in the multi-time case, Kolmogorov consistency is---in contrast to the two-time scenario---manifestly \textit{not} equivalent to 
\begin{gather}
\label{eqn::FixPointMulti}
    \Kcal_i^\dagger[Q_{i}(\mathbf{m}_{n:i+1})] = Q_{i}(\mathbf{m}_{n:i+1})\, . 
\end{gather}
In particular, satisfaction of the above equation for all $i$ is sufficient for satisfaction of Kolmogorov conditions---as can be seen by direct insertion into Eq.~\eqref{eqn::LudersMulti}---but not necessary (see below). Nonetheless, the formulation of Eq.~\eqref{eqn::LudersMulti-comm} informs us that the commutation relations $[K_i^{(m_i)},Q_i(\mathbf{m}_{n:i+1})]$, or variants thereof, are the relevant ones to investigate with respect to satisfaction of Kolmogorov consistency conditions. 

Note that if \textit{all} CP maps $\{\Kcal_{i}^{(m_i)}\}$ and \textit{all} intermediate dynamics $\{\Lambda_{i:i-1}\}$ are invertible (the former is a choice, while the latter is generally true for Markovian dynamics~\cite{jeknic-dugic_invertibility_2021}), then one is guaranteed a full basis of states at each time $t_i$ provided that the very initial state is arbitrary; this implies that the Kolmogorov consistency conditions are then indeed equivalent to satisfaction of Eq.~\eqref{eqn::FixPointMulti}, even in the multi-time scenario. In this case, one can, in the spirit of Thm.~\ref{thm::Luders}, establish a direct connection between measurement non-invasiveness for all input states and the vanishing of the commutator $[K_i^{(m_i)},Q_i(\mathbf{m}_{n:i+1})]$ (provided some additional assumptions are met, e.g., Hermiticity of the Kraus operators $K_i^{(m_i)}$---we will revisit these assumptions and the issue of requiring a full basis in Sec.~\ref{subsec:classical-com}). However, with respect to the choice of instruments in particular, this restriction is rather strong and would (as mentioned above) fail to cover the most natural scenario of measurements in a fixed basis. 

Additionally, independent of the fact that at $t_i$ one does not have access to a full basis of states, when considering general quantum measurements in multi-time processes, information can be transmitted through the system alone, and thus measurement statistics can be correlated over multiple points in time, even for Markovian processes probed with an independent measurement sequence. Thus, in the multi-time scenario, one must deal with entire sequences of outcomes---for example, $Q_i(\mathbf{m}_{n:i+1})$ is an operator that pertains to the entire sequence of future outcomes---instead of just outcomes at single or neighbouring times. As we discuss in detail in Sec.~\ref{sec::NCGD}, this added complexity cannot be circumvented as soon as general measurements are considered, and consequently all of our results will be phrased with respect to operators that generally correspond to measurement outcomes at multiple different points in time. 

We now detail how Eq.~\eqref{eqn::FixPointMulti} has to be modified in order to yield a direct relation between commutativity of pertinent operators and the classicality of the observed statistics in a multi-time experiment.

\subsection{No `Straightforward' Extension of L{\"u}ders' Theorem}
\label{sec::no_ext_Lud}
In Eq.~\eqref{eqn::LudersMulti}, we have expressed satisfaction of Kolmogorov conditions at an arbitrary time $t_i$ in terms of the measurement map $\Kcal_i^\dagger$, the post-measurement operators $Q_i(\mathbf{m}_{n:i+1})$ and the pre-measurement subnormalised states $\widetilde \rho(\mathbf{m}_{i-1:1})$. The close formal relation of said equation to those that appear in the two-time scenario, and thus in L{\"u}ders' theorem, informs us that the commutators $[K_i^{(m_i)}, Q_i(\mathbf{m}_{n:i+1})]$ play a pivotal role for the classicality of the observed statistics. As mentioned, it is easy to see that $[K_i^{(m_i)}, Q_i(\mathbf{m}_{n:i+1})] = 0$ implies satisfaction of Kolmogorov conditions, but the converse is not true (see Ex.~\ref{ex::Counter} for a concrete counterexample). Indeed, such a commutation relation is, at the outset, far too strict a condition to be necessary for Kolmogorov consistency: It would, for instance, imply that the measurements are non-invasive for \textit{arbitrary} system states at each time, which is not a necessary requirement for classicality, since for many relevant scenarios, the possible states $\widetilde \rho_i$ before a measurement at time $t_i$ do not span the full space of quantum states. As a result, our aim is to find weaker commutation relations that still guarantee the satisfaction of Kolmogorov conditions, and, conversely, to work out the consequences of Kolmogorov conditions on the commutation relations of the relevant operators. Here, we begin with the former direction. 

Recall that satisfaction of Kolmogorov consistency conditions is given by Eq.~\eqref{eqn::LudersMulti}. As we have emphasised, this does not necessarily imply that $\Kcal^{\dagger}_i[Q_i] = Q_i$ and is thus not equivalent to $[K_i^{(m_i)},Q_i] =0$. Following the logic of Thm.~\ref{thm::Luders}, one might then suspect that 
\begin{equation}
\label{eq:Comm-procc}
	\tr{\widetilde{\rho}_i(\mathbf{m}_{i-1:1})~\![K_{i}^{(m_i)},Q_i(\mathbf{m}_{n:i+1})]} = 0\, ,
\end{equation}
i.e., commutativity of the measurement operators \textit{with respect to} $\widetilde{\rho}_i(\mathbf{m}_{i-1:1})$ is equivalent to the satisfaction of Kolmogorov conditions at time $t_i$---at least for the case of Hermitian Kraus operators $K_{i}^{(m_i)}$. However, this is not the case, as the following example shows:
\begin{example}
\label{ex::non-Luders} 
Let the pre-measurement state (for some history of outcomes, which we renormalise and suppress for the sake of conciseness) be given by $\widetilde{\rho}_i =\dfrac{1}{2}\cv{\iden + \sigma_z}$ followed by a measurement described by Kraus operators $K_{i}^{(\pm)}=\dfrac{1}{2}\cv{\iden \pm \sigma_x}$ and let the post-measurement part (for some sequence of future outcomes) be encoded in the operator $Q_i\cv{\pm} =\dfrac{1}{2}\cv{\iden \pm \sigma_z}$. This situation can, e.g., arise in a two-step process without intermediate evolution, where the measurement with Kraus operators $\{K_{1}^{(\pm)}\}$ is made at $t_1$, the pre-measurement state is prepared as $\widetilde{\rho}_{1}=\dfrac{1}{2}\cv{\iden + \sigma_z}$ and at $t_2$ the observable $\sigma_z$ is measured with outcomes $\pm$, corresponding to the post-measurement operators $Q_1\cv{\pm} = \dfrac{1}{2}\cv{\iden \pm \sigma_z}$. We observe that (for future outcome $+$) we have $\tr{\widetilde{\rho}_1\cvb{K_{1}^{(\pm)},Q_1\cv{+}}} = 0$ but 
\begin{equation}
\begin{split}
		&\tr{\widetilde{\rho}_1 \Big(\sum_{m_1=\pm}K_{1}^{(m_1) \dagger}Q_1\cv{+} K_{1}^{(m_1)}-Q_1\cv{+}\Big)} \\
		&=  \tr{\widetilde{\rho}_1 \Big(\Kcal_{1}^\dagger[Q_1\cv{+}] -Q_1\cv{+}\Big)} = \frac{1}{2} \neq 0 \, ,
\end{split}
\end{equation}
implying that the Kolmogorov consistency conditions are not satisfied even though commutativity with respect to $\widetilde \rho_1$ [i.e., Eq.~\eqref{eq:Comm-procc}] holds and all involved Kraus operators are Hermitian. \hfill $\blacksquare$
\end{example} 

We emphasise that even though we only explicitly consider two measurements here, the considered scenario is indeed a multi-time one; in contrast to the scenario envisioned by L\"uders, we do not assume the states before the first measurement at $t_1$ to span a full basis, which is an implicit assumption of the two-time setting with arbitrary initial preparations. This, in turn, can be understood as the pre-measurement state $\widetilde \rho_1$ being the result of a previous measurement with a fixed outcome (or sequence thereof), making the scenario of the example a genuine multi-time one of which we only explicitly investigated the two times $t_1$ and $t_2$. The fact that Kolmogorov conditions are not satisfied despite the weak commutativity of Eq.~\eqref{eq:Comm-procc} holding then signifies that in the multi-time scenario, one requires a stricter commutation relation for the involved operators in order to obtain classical statistics.

Although the weak commutation relation with respect to $\widetilde{\rho}_i(\mathbf{m}_{i-1:1})$ is not restrictive enough, the following theorem informs us that \textit{absolute} commutativity with respect to $\widetilde{\rho}_i(\mathbf{m}_{i-1:1})$ is indeed sufficient to guarantee the satisfaction of Kolmogorov conditions: 

\begin{thm}\label{thm:abs-CM-to-KM}
Let $\mathbbm{P}(m_n,\hdots,m_1|\Jcal_n, \hdots, \Jcal_1)$ be a joint probability distribution obtained from Eq.~\eqref{eq:qrf}, i.e., by probing a Markovian process. If absolute commutativity 
\begin{gather}
\label{eqn::AbsComm}
    \tr{\widetilde{\rho}_{i}(\mathbf{m}_{i-1:1}) \,\big| \, [K_i^{(m_i)}, Q_i(\mathbf{m}_{n:i+1})]\, \big|} = 0
\end{gather} 
holds at all times $t_i$ and for all possible $\textbf{m}_{i-1:1}$, $m_i$ and  $\textbf{m}_{n:i+1}$, where $|X| := \sqrt{X^\dagger X}$, then $\mathbbm{P}(m_n,\hdots,m_1|\Jcal_n, \hdots, \Jcal_1)$ satisfies the Kolmogorov consistency conditions \textup{[}given explicitly in Eq.~\eqref{eqn::LudersMulti-comm}\textup{]}. 
\end{thm}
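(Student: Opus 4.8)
The plan is to reduce the statement to the single operator identity $C_{m_i}\,\widetilde{\rho}_i = 0$, where $C_{m_i} := [K_i^{(m_i)}, Q_i(\mathbf{m}_{n:i+1})]$, and then read off Kolmogorov consistency directly from Eq.~\eqref{eqn::LudersMulti-comm}. Since the excerpt establishes that the Kolmogorov consistency conditions hold if and only if Eq.~\eqref{eqn::LudersMulti-comm} is satisfied for every time $t_i$ and every choice of the remaining outcomes, it suffices to fix an arbitrary $t_i$, an arbitrary history $\mathbf{m}_{i-1:1}$ and future $\mathbf{m}_{n:i+1}$, and to show that each summand $\tr{\widetilde{\rho}_i(\mathbf{m}_{i-1:1}) K_i^{(m_i)\dagger} C_{m_i}}$ vanishes separately. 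Because the absolute-commutativity hypothesis in Eq.~\eqref{eqn::AbsComm} is assumed for each $m_i$ individually, proving the vanishing term-by-term is the natural route, and no cancellation between different outcomes $m_i$ need be invoked.

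The crux is to convert the scalar hypothesis $\tr{\widetilde{\rho}_i |C_{m_i}|} = 0$ into an operator equation. Both $\widetilde{\rho}_i$ and $|C_{m_i}| = \sqrt{C_{m_i}^\dagger C_{m_i}}$ are positive semi-definite, so I would invoke the elementary lemma that $\tr{AB} = 0$ with $A,B \geq 0$ forces $AB = 0$ (seen by writing $\tr{AB} = \tr{A^{1/2} B A^{1/2}}$, noting the argument is a positive semi-definite operator of vanishing trace, hence zero, and then peeling off square roots). Applying this with $A = \widetilde{\rho}_i$ and $B = |C_{m_i}|$ yields $\widetilde{\rho}_i |C_{m_i}| = 0$, and taking the adjoint gives $|C_{m_i}| \widetilde{\rho}_i = 0$.

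From here the key observation is that $\ker |C_{m_i}| = \ker C_{m_i}$, since $|C_{m_i}| v = 0 \iff \langle v, C_{m_i}^\dagger C_{m_i} v\rangle = \|C_{m_i} v\|^2 = 0$. The equation $|C_{m_i}|\widetilde{\rho}_i = 0$ then says that $\mathrm{ran}(\widetilde{\rho}_i) \subseteq \ker |C_{m_i}| = \ker C_{m_i}$, whence $C_{m_i}\widetilde{\rho}_i = 0$. Using cyclicity of the trace, each summand collapses: $\tr{\widetilde{\rho}_i K_i^{(m_i)\dagger} C_{m_i}} = \tr{K_i^{(m_i)\dagger} C_{m_i} \widetilde{\rho}_i} = 0$. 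Summing over $m_i$ reproduces Eq.~\eqref{eqn::LudersMulti-comm}, and since $t_i$ and the surrounding outcomes were arbitrary, the Kolmogorov consistency conditions follow.

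I would expect the main conceptual obstacle to be precisely this first conversion step — recognising that it is the \emph{absolute value} appearing in Eq.~\eqref{eqn::AbsComm}, rather than the bare weak commutativity of Eq.~\eqref{eq:Comm-procc}, that upgrades a vanishing scalar into the support condition $C_{m_i}\widetilde{\rho}_i = 0$. Indeed, Ex.~\ref{ex::non-Luders} shows that without the absolute value the implication fails: there the weak commutativity $\tr{\widetilde{\rho}_1 [K_1^{(m_1)}, Q_1(+)]} = 0$ holds yet Kolmogorov consistency does not, which is only possible because $\widetilde{\rho}_1 |[K_1^{(m_1)}, Q_1(+)]| \neq 0$. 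The remaining manipulations are routine, and, notably, Hermiticity of the Kraus operators — essential in Thm.~\ref{thm::Luders} — plays no role here.
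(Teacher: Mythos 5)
Your proposal is correct and follows essentially the same route as the paper's proof: both convert the hypothesis $\tr{\widetilde{\rho}_i\,|[K_i^{(m_i)},Q_i]|}=0$ into the operator identity $|[K_i^{(m_i)},Q_i]|\,\widetilde{\rho}_i=0$ via positivity of both factors, and then show each summand of Eq.~\eqref{eqn::LudersMulti-comm} vanishes individually before summing over $m_i$. The only cosmetic difference is that the paper inserts the polar decomposition $[K_i^{(m_i)},Q_i]=V^{(m_i)}M^{(m_i)}$ directly into the trace, whereas you reach the equivalent conclusion $[K_i^{(m_i)},Q_i]\,\widetilde{\rho}_i=0$ through the kernel identity $\ker|C|=\ker C$ --- an interchangeable step, and your closing observations (that the absolute value is what rescues the argument over Eq.~\eqref{eq:Comm-procc}, and that Hermiticity of the Kraus operators is not needed here) accurately reflect the paper's discussion.
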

\begin{proof} 
For simplicity, we will omit the explicit arguments of the involved operators throughout the proof. We first show that Eq.~\eqref{eqn::AbsComm} implies $\tr{\widetilde{\rho}_{i}~K_{i}^{(m_i) \dagger}[K_{i}^{(m_i)},Q_i]} = 0$. To this end, we note that Eq.~\eqref{eqn::AbsComm} implies 
\begin{gather}
     \,\big| \, [K_i^{(m_i)}, Q_i]\, \big| ~\widetilde{\rho}_{i} = 0\, ,
\end{gather}
since both $\widetilde{\rho}_{i}$ and $\big| \, [K_i^{(m_i)}, Q_i]\, \big|$ are positive semidefinite. Now, let us employ the polar decomposition $[K_i^{(m_i)}, Q_i] = V^{(m_i)} M^{(m_i)}$, where $V^{(m_i)}$ is unitary and $M^{(m_i)} = \big| \, [K_i^{(m_i)}, Q_i]\, \big| \geq 0$, i.e., $M^{(m_i)}  \widetilde{\rho}_{i}$ = 0. With this, we obtain 
\begin{gather}
\begin{split}
    &\tr{\widetilde{\rho}_{i}~K_{i}^{(m_i) \dagger}[K_{i}^{(m_i)},Q_i]} \\
    &\phantom{as}= \tr{K_{i}^{(m_i) \dagger}V^{(m_i)} M^{(m_i)} \widetilde{\rho}_{i}} = 0\, .
\end{split}
\end{gather} 
By summing this expression over $m_i$, Eq.~\eqref{eqn::LudersMulti-comm}---and thus satisfaction of the Kolmogorov consistency conditions---is recovered. \end{proof}

Thm.~\ref{thm:abs-CM-to-KM} informs us that absolute commutativity with respect to the state of the system at each time is sufficient for classicality of the observed statistics. However, in contrast to the two-time scenario, this requirement is \textit{not} necessary for classicality---even in the case where the involved Kraus operators are Hermitian. To see this, consider the following example: 
\begin{example}
\label{ex::Counter}
We employ Ex.~\ref{ex::non-Luders} with a change in the (renormalised) pre-measurement part to $\widetilde{\rho}_{1}=\dfrac{1}{2}\cv{\iden + \sigma_y}$ but still followed by a measurement $K_{1}^{(\pm)}=\dfrac{1}{2}\cv{\iden \pm \sigma_x}$ and the post-measurement parts are encoded in the operators $Q_{1}\cv{\pm}=\dfrac{1}{2}\cv{\iden \pm\sigma_z}.$ 
We observe that---for these choices---Kolmogorov consistency holds, i.e., 
	\begin{equation*}
		\tra\Big\{\widetilde\rho_1 \Big[\sum_{m_1=\pm}K_{1}^{(m_1) \dagger}Q_1\cv{\pm} K_{1}^{(m_1)}-Q_1\cv{\pm}\Big]\Big\} = 0 \, ,
	\end{equation*}
since $\sum_{m_1=\pm}K_{1}^{(m_1) \dagger}Q_1\cv{\pm} K_{1}^{(m_1)}-Q_1\cv{\pm} =\pm\dfrac{\sigma_z}{2}$, which is trace orthogonal to $\widetilde{\rho}_1$. However we find that $\big\vert\cvb{K_{1}^{(\pm)},Q_1\cv{\pm}}\big\vert=\dfrac{\iden}{4}$ leading to 
	\begin{equation*}
		\tr{\widetilde \rho_1\big\vert\cvb{K_{1}^{(\pm)},Q_1\cv{\pm}}\big\vert} = \frac{1}{4} \neq 0 \, .
	\end{equation*}
Consequently, this example shows that classical statistics in a multi-time experiment do not imply absolute commutativity with respect to the state of the interrogated system over time, even when all Kraus operators are Hermitian (which is the case here). In turn, since absolute commutation with respect to $\widetilde \rho_i$ is weaker than commutativity itself, this makes the considered case also an example of a situation where satisfaction of Kolmogorov consistency conditions does not imply $\cvb{K_{i}^{(m_i)},Q_i(\textbf{m}_{n:{i+1}})} = 0$, as mentioned at the beginning of this section. \hfill $\blacksquare$
\end{example}

While not being equivalent to satisfaction of the Kolmogorov consistency conditions, absolute commutativity with respect to the state $\widetilde \rho_i$ guarantees classical statistics and, in contrast to the much stronger standard commutativity condition, does not necessarily imply Kolmogorov consistency independent of the sequentially measured system states, making it a more relevant consideration for the envisaged scenario. 

Regarding this connection between commutativity and classicality, two remarks are in order. On the one hand, if $\widetilde \rho_i$ is \textit{full rank}, then it is easy to see that $\tr{\widetilde{\rho}_{i}(\mathbf{m}_{i-1:1}) \big|[K_i^{(m_i)}, Q_i(\mathbf{m}_{n:i+1})]\big|} = 0$ implies $[K_i^{(m_i)}, Q_i(\mathbf{m}_{n:i+1})] = 0$, thus equating the assumption of absolute commutativity with respect to $\widetilde \rho_i$ to the (rather strong) assumption of standard commutativity. However the states $\widetilde{\rho}_{i}$ do not necessarily have to be full rank. This holds true, e.g., for the case of pure projective measurements in a fixed basis on a qutrit and intermediate dynamics that only map to the (lower dimensional) space that is spanned by $\{\ket{0}, \ket{1}\}$. In turn, this makes the assumption of Thm.~\ref{thm:abs-CM-to-KM} strictly weaker than full commutativity, while still being strictly stronger than commutativity with respect to $\widetilde \rho_i$ [i.e., satisfaction of Eq.~\eqref{eq:Comm-procc}], which, as we have seen, is \textit{not} sufficient to guarantee classical statistics.
	
On the other hand, the states $\widetilde \rho_i$ at time $t_i$ are the result of a state preparation at the initial time, followed by a sequence of measurements and intermediate dynamics. Assuming that a full basis of initial states can be prepared (as is assumed in the two-time scenario envisioned by L{\"u}ders), then it is---in principle---possible that, for each sequence of outcomes, the corresponding states $\widetilde \rho_i$ also span a basis at each time $t_i$. In this case, satisfaction of Kolmogorov conditions at $t_i$ would exactly coincide with $\Kcal_i^\dagger[Q_i(\textbf{m}_{n:i+1})] = Q_i(\textbf{m}_{n:i+1})$ [see Eq.~\eqref{eqn::LudersMulti}] and, following the same reasoning that led to L{\"u}ders' Thm.~\ref{thm::Luders}, we would be able to recover the equivalence between classical statistics and the vanishing of the commutators $[K_i^{(m_i)}, Q_i(\textbf{m}_{n:i+1})] = 0$. In this sense, it might seem artificial to investigate the case where states at each time do \textit{not} span a full basis, which, as we will see, leads to a more layered relationship between commutativity and classicality. However, this latter case exactly mirrors many physically relevant scenarios (like, e.g., the case of sequential projective measurements).

This inequivalence between commutation relations and classicality naturally raises the question: \emph{What further assumptions, in addition to classicality of statistics, must be satisfied in order to ensure commutation relations of the relevant operators?} 

\subsection{Commutativity as a Notion of Classicality: When is Kolmogorov Consistency Sufficient for L{\"u}ders-type Theorems?}\label{subsec:classical-com} 

In this section we investigate under which conditions satisfaction of Kolmogorov consistency implies the vanishing of pertinent commutator expressions. Unlike the previous sections, here---just like in the scenario considered by L\"uders---we have to restrict the Kraus operators of the probing instruments to be Hermitian in order to establish a clear connection between Kolmogorov consistency and vanishing commutators. 

As mentioned previously, a key element that makes the multi-time setting substantially different to the two-time one is that one is no longer guaranteed a full basis of quantum states at each time. Nonetheless, below we outline a condition that ensures that the set of possible states at each time (conditioned on previous outcome sequences) essentially forms a basis \textit{with respect to} any subsequent measurements (additionally accounting for the intermediate dynamics). Analogously to the case considered by L\"uders, our argument requires Hermiticity of the measurement Kraus operators; we leave the analysis of sufficient conditions regarding more general measurements in this setting for future work. The conditions that we detail below consequently ensures a connection between Kolmogorov consistency and commutativity in the multi-time setting (for Hermitian Kraus operators). Importantly, just like in the case of L\"uders, under the additional assumptions we make, commutativity and classicality are equivalent.

To establish the connection between Kolmogorov consistency and commutation relations, consider the set of possible pre-measurement states at some time (say, $t_i$) of interest. As mentioned, we can follow a L\"uders type argument, if these states form a basis with respect to the post-measurement operators $Q_i$. Formally, we can express this by letting $\mathbb{S}$ be a set of initial states $\rho$ and $\mathbb{H}_i$ be the span of the union of the images of all possible pre-measurement sequences up until time $t_i$:
\begin{gather}
\begin{split}
    &\mathbb{H}_i := \mathrm{span}\\
    &\bigcup_{\mathbf{m}_{i-1:1}}\sum_{\ell_2\dots \ell_i} R^{(\ell_i,m_{i-1})}_{i:i-1} \cdots R^{(\ell_2,m_1)}_{2:1} \mathbb{S} R^{(\ell_2,m_1)\dagger}_{2:1} \cdots R^{(\ell_i,m_{i-1})\dagger}_{i:i-1} \, , \label{eq:H-def}
\end{split}
\end{gather}
for $i\geqslant 2$, i.e., $\mathbb{H}_i$ is the span of all attainable states $\widetilde \rho_i(\mathbf{m}_{i-1:1})$ at time $t_i$. Furthermore, we take the union of all possible projections for the post-measurement operators to define:
\begin{align}
    \mathbb{F}_i &:= \mathrm{span}\bigcup_{\mathbf{m}_{n:i-1}}\bigg\{P_\mu : Q_i\cv{\mathbf{m}_{n:i-1}}=\sum_\mu\lambda_\mu P_\mu\bigg\}, \label{eq:F-def}
\end{align}
where, for technical reasons, we will assume non-degeneracy of $Q_i$ (see the proof of Thm.~\ref{thm:KM-to-abs-CM}). Demanding that the pre-measurement states form a basis with respect to the post-measurement operators now amounts to the requirement $\mathbbm{F}_i \subseteq \mathbbm{H}_i$. As it turns out, together with the satisfaction of Kolmogorov consistency, this implies that  $\mathrm{tr}[P_\mu\mathcal{K}_i^\dagger[Q_i]] = \mathrm{tr}[P_\mu Q_i] \, \forall \, \mu$, which suffices to prove that the pertinent commutation relations hold (under the assumption that all Kraus operators pertaining to the measurement map $\Kcal^\dagger$ are Hermitian):

\begin{thm}\label{thm:KM-to-abs-CM}
Let $\mathbbm{P}(m_n,\hdots,m_1|\Jcal_n, \hdots, \Jcal_1)$ be a joint probability distribution obtained from Eq.~\eqref{eq:qrf}, i.e., by probing a Markovian process. Assume that $\mathbbm{P}(m_n,\hdots,m_1|\Jcal_n, \hdots, \Jcal_1)$ satisfies the Kolmogorov consistency conditions \textup{[}given explicitly in Eq.~\eqref{eqn::LudersMulti-comm}\textup{]} for all initial state $\rho$ in $\mathbb{S}$ and for every measurement time $t_i$ and that $Q_i(\mathbf{m}_{n:i+1})$ is non-degenerate for all $\textbf{m}_{n:i+1}$. If all Kraus operators $K_i^{(m_i)}$ are Hermitian for all $m_i$ and
    \begin{align}\label{eq:F-sub-H}
         \mathbb{F}_i&\subseteq\mathbb{H}_i ,
    \end{align}
then the commutation relations hold, i.e., 
\begin{equation}
        \cvb{K_i^{(m_i)}, Q_i(\mathbf{m}_{n:i+1})} = 0 \label{eq:CM-thm4}
\end{equation}
for all post-measurement sequences $\mathbf{m}_{n:i+1}$ and all $m_i$.
\end{thm}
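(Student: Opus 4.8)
The plan is to reduce the claim to a L\"uders-type argument carried out on the spectral projectors of $Q_i$, with the containment $\mathbb{F}_i\subseteq\mathbb{H}_i$ serving to promote the non-invasiveness conditions---which a priori only hold against the attainable states $\widetilde\rho_i$---into conditions tested against those projectors. First I would rewrite Kolmogorov consistency in operator form: by the equivalence of Eqs.~\eqref{eqn::LudersMulti-comm} and~\eqref{eqn::LudersMulti}, consistency at $t_i$ for a fixed future sequence $\mathbf{m}_{n:i+1}$ is exactly $\tr{\widetilde\rho_i(\mathbf{m}_{i-1:1})\,(\Kcal_i^\dagger[Q_i]-Q_i)}=0$ for every history $\mathbf{m}_{i-1:1}$ and every initial state $\rho\in\mathbb{S}$. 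Since this holds separately for each attainable pre-measurement state, linearity of the trace extends it to their entire span, giving $\tr{\sigma\,(\Kcal_i^\dagger[Q_i]-Q_i)}=0$ for all $\sigma\in\mathbb{H}_i$ [cf.~Eq.~\eqref{eq:H-def}].

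Next, I would invoke $\mathbb{F}_i\subseteq\mathbb{H}_i$ [Eq.~\eqref{eq:F-sub-H}]. Writing the non-degenerate spectral decomposition $Q_i=\sum_\mu\lambda_\mu P_\mu$ with $\lambda_1>\lambda_2>\cdots$ and rank-one $P_\mu=\ketbra{\psi_\mu}{\psi_\mu}$, the union defining $\mathbb{F}_i$ in Eq.~\eqref{eq:F-def} contains the eigenprojectors of this particular $Q_i$, so each $P_\mu\in\mathbb{H}_i$. As $\Kcal_i^\dagger[Q_i]-Q_i$ is Hermitian, the orthogonality obtained above then yields the intermediate identity flagged before the theorem, namely $\tr{P_\mu\Kcal_i^\dagger[Q_i]}=\tr{P_\mu Q_i}=\lambda_\mu$ for every $\mu$.

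From here I would mirror the proof of Thm.~\ref{thm::Luders}, iterating over the eigenprojectors in \emph{decreasing} order of eigenvalue. For the top eigenvalue, substituting $\Kcal_i^\dagger[Q_i]=\sum_{m_i}K_i^{(m_i)\dagger}Q_iK_i^{(m_i)}$ together with the normalisation $\sum_{m_i}K_i^{(m_i)\dagger}K_i^{(m_i)}=\iden$ into $\tr{P_1\Kcal_i^\dagger[Q_i]}=\lambda_1$ produces $\sum_{m_i}\braket{K_i^{(m_i)}\psi_1|(\lambda_1\iden-Q_i)|K_i^{(m_i)}\psi_1}=0$; since $\lambda_1\iden-Q_i\geq 0$ every summand is non-negative and hence vanishes, forcing $K_i^{(m_i)}\ket{\psi_1}$ into the one-dimensional $\lambda_1$-eigenspace and, by Hermiticity of $K_i^{(m_i)}$, giving $[K_i^{(m_i)},P_1]=0$. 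Assuming inductively that $[K_i^{(m_i)},P_\nu]=0$ for all $\nu<\mu$, the vector $K_i^{(m_i)}\ket{\psi_\mu}$ is orthogonal to every higher eigenspace, so on that orthogonal complement $\lambda_\mu$ is the largest eigenvalue and $\lambda_\mu\iden-Q_i\geq 0$ there; repeating the quadratic-form argument gives $[K_i^{(m_i)},P_\mu]=0$. Summing over $\mu$ then yields Eq.~\eqref{eq:CM-thm4}, i.e.~$[K_i^{(m_i)},Q_i]=0$.

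The main obstacle---and the reason both Hermiticity and $\mathbb{F}_i\subseteq\mathbb{H}_i$ are required---is precisely the gap relative to the two-time setting: Kolmogorov consistency supplies only the ``diagonal'' data $\tr{P_\mu(\Kcal_i^\dagger[Q_i]-Q_i)}=0$, not the full operator equality $\Kcal_i^\dagger[Q_i]=Q_i$ that underlies Thm.~\ref{thm::Luders}. I therefore expect the delicate point to be verifying that these diagonal conditions \emph{alone} drive the induction; this hinges on processing the eigenvalues from largest to smallest so that at each stage the relevant quadratic form is governed by the largest remaining eigenvalue, with non-degeneracy ensuring that invariance of each one-dimensional eigenspace collapses immediately to commutation with $P_\mu$.
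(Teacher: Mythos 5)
Your proposal is correct and follows essentially the same route as the paper's proof: both use $\mathbb{F}_i\subseteq\mathbb{H}_i$ together with Kolmogorov consistency to obtain the diagonal identities $\tr{P_\mu\,\Kcal_i^\dagger[Q_i]}=\tr{P_\mu Q_i}$ and then run the L\"uders iteration over the spectral projectors of $Q_i$ in decreasing order of eigenvalue, invoking Hermiticity of the $K_i^{(m_i)}$ and non-degeneracy of $Q_i$ exactly as the paper does. The only cosmetic differences are that you justify the passage to the $P_\mu$ by linearity of the trace over the span $\mathbb{H}_i$ (arguably cleaner than the paper's informal claim that some initial state realises $\widetilde\rho_i=P_\mu$), and that you organise the induction via restriction to the orthogonal complement of the earlier eigenspaces rather than via the paper's deflated operators $Q_i^{(\mu)}=Q_i^{(\mu-1)}-\lambda_{\mu-1}P_{\mu-1}$, which are equivalent formulations of the same step.
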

Before providing the proof of Thm.~\ref{thm:KM-to-abs-CM}, we emphasise that the converse trivially holds (even without any assumptions), since commutativity [i.e., Eq.~\eqref{eq:CM-thm4}] directly implies the satisfaction of Kolmogorov consistency. 
\begin{proof}
From the assumption Eq.~\eqref{eq:F-sub-H}, one can see that for a given post-measurement sequence $\textbf{m}_{n:i+1}$ and for any pre-measurement sequence $\textbf{m}_{i-1:1}$, there exists an initial state $\rho$ in $\mathbb{S}$ leading to $\widetilde{\rho}_{i}(\mathbf{m}_{i-1:1})=P_\mu$ for any $P_\mu$ defined via $Q_i\cv{\mathbf{m}_{n:i-1}}=\sum_\mu\lambda_\mu P_\mu$. In other words, Kolmogorov consistency in the form of Eq.~\eqref{eqn::LudersMulti} leads to $\mathrm{tr}[P_\mu\mathcal{K}_i^\dagger[Q_i]] = \mathrm{tr}[P_\mu Q_i] \, \forall \, \mu$.

Now, using the same arguments as those of the proof of Thm.~\ref{thm::Luders}, one sees that $\text{tr}[P_\mu\mathcal{K}_i^\dagger[Q_i]] = \text{tr}[P_\mu Q_i]$ for $\mu=1$ leads to 
	\begin{gather}
	\label{eq:Luder-train_identity-mod}
	\begin{split}
		&\sum_m\cvr{K^{(m_i)}_i\varphi_1\, \vline\left(\lambda_1\iden-Q_i\right)K^{(m_i)}_i\varphi_1}  \\
		&\phantom{asdfasdf} =\sum_m\cvV{\left(\lambda_1\iden-Q_i\right)^{1/2}K^{(m_i)}_i\ket{\varphi_1}}^2 = 0 \, ,
	\end{split}
	\end{gather}
where we set $P_1=:\ketbra{\varphi_1}{\varphi_1}$ and made use of the fact that $\lambda_1\iden-Q_i\geqslant 0.$ In other words, $\left(\lambda_1\iden-Q_i\right)^{1/2}K^{(m_i)}_iP_1=0$ or
	\begin{equation}
		Q_iK_i^{(m_i)}P_1\ket{\varphi} = \lambda_1K^{(m_i)}P_1\ket{\varphi} 
	\end{equation}
for arbitrary states $\ket{\varphi},$ i.e., $K_i^{(m_i)}$ leaves the $\lambda_1-$eigensubspace invariant. Thus, due to non-degeneracy of $Q_i$, we have $K^{(m_i)}_iP_1=P_1 K^{(m_i)}_iP_1$ and assuming that the $K^{(m_i)}_i$ are Hermitian, we observe that $[K^{(m_i)}_i,P_1]=0.$ Again, we set $Q_i^{(\mu)}=Q_i^{({\mu-1})}-\lambda_{\mu-1} P_{\mu-1}$ and $Q_i^{(0)}=Q_i.$ Since $[K^{(m_i)}_i,P_1]=0,$ the expression $\mathrm{tr}[P_2\mathcal{K}_i^\dagger[Q_i]] = \mathrm{tr}[P_2 Q_i]$ can be reduced to $\mathrm{tr}[P_2\mathcal{K}_i^\dagger[Q^{(2)}_i]] = \mathrm{tr}[P_2 Q^{(2)}_i]$ and then it follows that $[K^{(m_i)}_i,P_2]=0$ (by invoking the same previous argument but replacing $Q_i$ and $P_1$ with $Q^{(2)}_i$ and $P_2$, respectively). Iterating this argument---as in the proof of Thm.~\ref{thm::Luders}---we obtain that $K^{(m_i)}_iP_\mu=P_\mu K^{(m_i)}_iP_\mu$ for all $m$ and $\mu,$ i.e., $K^{(m_i)}_i$ leaves all eigensubspaces of $Q$ invariant. Then $\cvb{K_i^{(m_i)}, Q_i(\mathbf{m}_{n:i+1})} = 0$ for all post-measurement sequences $\mathbf{m}_{n:i+1}$ and all $m_i$ as claimed.
\end{proof}
We emphasise that---as in the case of L\"uders' theorem---this logic can fail to hold if the $K^{(m_i)}_i$ are not Hermitian (as can already be explicitly seen by considering Ex.~\ref{eqn::ExNonHerm}).

For illustration of the above theorem, let us consider the following example.

\begin{example}
\label{ex::thm4-eg}
Recall the scenario of Ex.~\ref{ex::non-Luders}. We modify it to be a three step process with measurements in the $\sigma_z$-basis at the first and third time, while at the second time, a measurement in the $\sigma_x$-bais is carried out. In addition, let the dynamics between the first and the second measurement, as well as between the second and the third measurement, be given by a Hadamard gate $H$, with $H\ket{0/1} = \ket{\pm}$ and $\ket{\pm} = 1/\sqrt{2}(\ket{0} \pm \ket{1})$. 

We focus on time $t_2$ as the measurement time of interest (i.e., the time for which we analyse Kolmogorov consistency). For an arbitrary initial state $\rho$, the measurement at $t_1$ leads to the set of possible post-measurement states $\widetilde{\rho}_1 \in \{\ketbra{0}{0}, \ketbra{1}{1}\}$, where we omit potential subnormalisation. These states will then evolve to $H\cvb{\widetilde{\rho}_1}H=\widetilde{\rho}_2 \in \{ \ketbra{+}{+}, \ketbra{-}{-} \}$. Thus, at time $t_2$, we have $\mathbbm{H}_2 = \mathrm{span} \{ \ketbra{+}{+}, \ketbra{-}{-}\}$.

Since a measurement in the $\sigma_z$-basis is performed at $t_3$, the post-measurement part at $t_2$ amounts to the corresponding projectors, `rolled back' by means of the evolution $\Lambda^\dagger_{3:2}[\bullet] = H[\bullet]H$, i.e., we have $Q_{2}\cv{0/1} \in \{ \ketbra{+}{+}, \ketbra{-}{-}\}$. With this, we observe that 
    \[\mathbb{F}_2=\mathbb{H}_2\]
and thus the condition of Eq.~\eqref{eq:F-sub-H} holds. Likewise, for the measurement $K_{2}^{(\pm)}$, the Kolmogorov consistency condition reads
 \begin{equation}
     \tr{\widetilde{\rho}_2\bigg(\sum_mK^{(m) \dagger}_2\cvb{K^{(m)}_2,Q_2\cv{0/1}}\bigg)} = 0.
 \end{equation}    
where $m \in\{+,-\}$. We can calculate the commutativity expression of Eq.~\eqref{eq:CM-thm4} explicitly:
\begin{gather}
    \cvb{K^{(\pm)}_2,Q_2\cv{0/1}}= \cvb{\ketbra{\pm}{\pm},\ketbra{\pm}{\pm}}=0.
\label{eqn::ProjComEx}
\end{gather}\hfill $\blacksquare$
\end{example}

The inclusion property of Eq.~\eqref{eq:F-sub-H} is a rather strong requirement, which---as we have seen in Ex.~\ref{ex::thm4-eg}---can be checked for and satisfied in particular cases. However, it can fail to hold for many experimentally relevant situations that yield classical statistics (like, e.g., measurements in a fixed basis, see Ex.~\ref{ex::NCGD_incl} below). Additionally, one would ideally like to deduce similar conditions that apply to arbitrary (e.g., non-Hermitian) measurements at the expense of potentially weakening the vanishing commutator expression; so far, such results have proved elusive. As a consequence, in the multi-time setting, the relation between observed classicality and the commutation of pertinent operators presents itself much more layered than in the two-time case considered by L\"uders, and must seemingly be decided on a case-by-case basis. 

We now finish our discussion of Markovian classical multi-time processes by discussing why, even though the underlying process is memoryless, it is, in general, necessary to consider the \textit{full} history (future) of outcomes $\mathbf{m}_{i-1:1}$ ($\mathbf{m}_{n:i+1}$) at each time $t_i$, and not just the preceding (subsequent) ones $m_{i-1}$ ($m_{i+1}$). The latter (i.e., only considering outcomes at $t_{i-1}$ and $t_{i+1}$ for each time $t_i$) can be done for the special case of projective pure measurements in a fixed basis, which leads to an equivalent formulation of classicality in terms of non-coherence-generating-and-detecting maps. However, as we will see in the following section, a direct connection between such dynamics to pertinent commutation relations is generally not obvious.

\subsection{Markovian Processes \& NCGD Dynamics}
\label{sec::NCGD}

Up to this point, we have investigated the conditions under which Kolmogorov consistency of observed statistics and the vanishing of pertinent commutator expressions---i.e., structural properties pertaining to the dynamics and measurement scheme---are related in the multi-time setting. Crucially, we see that Kolmogorov consistency concerns a deep interplay between the choice (and assumptions) of measurements and the underlying dynamics that depends on entire sequences of measurement outcomes. 

On the other hand, for the Markovian case we consider, recent work has demonstrated a one-to-one connection between Kolmogorov consistency and the (in)ability for pairs of \textit{neighbouring} dynamical maps $\Lambda_{i:i-1}$ (describing the open evolution of Markovian process) to generate and detect coherence with respect to a fixed basis determined by the measurement scheme~\cite{Smirne2019b, Strasberg2019, Milz2019,  Smirne2019}. Specifically, these works showed equivalence between classical statistics and the set of so-called non-coherence-generating-and-detecting (\textbf{NCGD}) dynamics, i.e., maps that can create coherences, but those coherences cannot be `detected' by the subsequent dynamics [see Eq.~\eqref{eqn::NCGD_rephrasing} for a proper definition]. While this criterion can be phrased entirely in terms of dynamical maps pertaining to neighbouring times, our work has required the consideration of \textit{entire measurement sequences} of past $\mathbf{m}_{i-1:1}$ and future $\mathbf{m}_{n:i+1}$ outcomes in general, instead of simply adjacent ones.  

At the outset, this necessity seems to be overkill, since, intuitively, the statistics measured at each time of a \emph{Markovian} process should only depend upon the most recent outcome, and not on the entire history. We now return to elucidate why, even though the underlying dynamics that we study are assumed to be Markovian, one must indeed consider commutator expressions of the relevant operators corresponding to entire sequences. The important subtlety to note here is that general quantum measurements \emph{do not} break the flow of information through the measured system, and thus even though there is no \emph{non-Markovian memory} (travelling through an environment), the observed statistics can still be correlated over multiple times. Put differently, after a general measurement, the state of the measured system is unknown, and might depend on earlier measurements, even though the underlying process itself does not exhibit any non-Markovian memory. Here and in what follows, by `flow of information', we mean any dependence of the post-measurement state on the measured state, and thus on previous measurement outcomes/choices of instruments. For instance, a sharp, projective measurement with outcome $m$ sends $\rho$ to $\ketbra{m}{m}$ and thus breaks the flow of information since the post-measurement state only depends on the observed outcome $m$, but not on any previous measurements or manipulations of the measured state $\rho$; on the other hand, a POVM with elements $\{ \xi^{(m)}\}$ that sends $\rho \mapsto \xi^{(m)}\, \rho \,\xi^{(m) \dagger}$ does not break the flow of information, since the post-measurement state (i.e., not only the probability to observe the measurement outcome $m$) still depends on $\rho$ and in general any previous manipulations thereof. We emphasize that, throughout, we uniquely use the term `flow of information' in this sense, with no concrete reference to technical notions of the term `information' employed in quantum information science.

Importantly, this point is not critically related to any inherently `quantum' notion regarding the measurement (such as being a POVM comprising non-projective, non-Hermitian, or non-orthogonal elements), but can occur for \textit{any} measurement for which an outcome does not fully determine the post-measurement state of the system. This can happen in classical physics for `fuzzy' measurements that coarse grain over different levels~\cite{Taranto2019A}, and is generally the case for measurements in quantum mechanics described by CP maps that do \textit{not} necessarily break the information flow through the system. For such measurements, the far past can still have an influence on the future~\cite{Pollock2018a, Pollock2018b,Taranto2019L,Taranto2019A,Taranto2019S,TarantoThesis}. As a result, measurement invasiveness might not be detected at the next step, but possibly only further in the future, and any conditions pertaining to neighbouring dynamical maps alone are insufficient to characterise classicality. To see this explicitly, consider the following example, which concerns noisy (i.e., not rank $1$) orthogonal measurements:
\begin{example}
Let $\rho$ be the state of a four level system that is measured at times $\{t_1, t_2, t_3\}$ by means of projective---but not rank-1---measurements, i.e., $\Jcal_i = \{K_i^{(1)} = \Pi_{(12)}, K_i^{(2)} = \Pi_{(34)}\}$, where $\Pi_{(xy)}$ is the projector on the space spanned by $\{\ket{x},\ket{y}\}$, with $x,y \in \{1,2,3,4\}$. Now, let the dynamics in between measurements be given by the Kraus operators $\{L_{2:1}^{(1)} = \ketbra{1}{1} + \ketbra{2}{4}, L_{2:1}^{(2)} = \ketbra{2}{2} + \ketbra{4}{3}\}$ and $\{L_{3:2}^{(1)} = \tfrac{1}{\sqrt{2}}(\ketbra{3}{1} + \ketbra{3}{2}), L_{3:2}^{(2)} = \tfrac{1}{2}(\ketbra{1}{1} - \ketbra{1}{2} - \ketbra{2}{1} + \ketbra{2}{2}) + \ketbra{3}{3}
+ \ketbra{4}{4}\}$, respectively. These choices of Kraus operators correspond to CPTP maps, since $\sum_{\ell_{i+1}} L_{i+1:i}^{(\ell_{i+1})\dagger}L_{i+1:i}^{(\ell_{i+1})} = \ident$ for $i\in \{1,2\}$. It is easy to see that in this case, the statistics of the measurement at $t_2$ is independent of whether or not the measurement at $t_1$ was performed. Overall, the measurement at $t_1$ reduces the initial state $\rho$ to a block diagonal structure; however, the statistics at $t_2$ only depend on the diagonal terms of $\rho$, such that the invasiveness of the first measurement is not detected. Specifically, we have 
\begin{align}
    &\Pprob(m_2=1,\cancel{m_1}) = \rho_{11} + \rho_{22}+\rho_{44} = \sum_{m_1} \Pprob(m_2=1,m_1)\,, \notag \\
    &\Pprob(m_2=2,\cancel{m_1}) = \rho_{33} = \sum_{m_1} \Pprob(m_2=2,m_1) \,.
\end{align}
As a result, the two-time statistics do not reveal the non-classicality of the observed statistics, despite the dynamics being Markovian. However, the invasiveness of the first measurement can be observed via the measurement at time $t_3$. Concretely, the dynamics between $t_2$ and $t_3$ is such that it maps off-diagonal terms to diagonal ones, and thus the joint probability to measure $m_2=1$ and $m_3 = 1$ at times $t_2$ and $t_3$ (with \textit{no} measurement at $t_1$), respectively, is given by
\begin{gather}
    \Pprob(m_3=1,m_2=1,\cancel{m_1}) = \frac{1}{2} \left(\rho_{11} - 2\text{Re}(\rho_{14}) + \rho_{22}  + \rho_{44}\right).
\end{gather}
Since the above probability depends on the entry $\rho_{14}$ of the initial state $\rho$, it cannot coincide with the corresponding probability for the case where a measurement was performed at $t_1$. As mentioned, the overall action of the measurement at $t_1$ is to force $\rho$ into a block-diagonal structure, implying in particular $\rho_{14}\mapsto 0$ if a measurement at $t_1$ is performed. Consequently, we have 
\begin{align}
    \sum_{m_1} \Pprob(m_3=1,m_2=1,m_1) &= \frac{1}{2} (\rho_{11} + \rho_{22} + \rho_{44}) \notag \\
    &\neq \Pprob(m_3=1,m_2=1,\cancel{m_1}) \, . 
\end{align}
Accordingly, for the case of general instruments, one indeed must consider the full past and full future statistics for the relevant commutation relations in order to characterise classicality. \hfill $\blacksquare$
\end{example}

In the example above, we see that the invasiveness of the first measurement `skips' a time, i.e., it is not detected at time $t_2$ but rather only by the measurement at time $t_3$. Such `skipping' of detectability is not limited to measurement invasiveness and has recently been analysed with respect to the activation of hidden quantum memory~\cite{Taranto_2022}. 

Such behaviour highlights the intricacies involved when considering quantum processes probed sequentially at multiple times by general instruments. However, for particular types of measurements, the flow of information through the system is broken, and one can therefore connect classicality to structural properties of the underlying dynamical maps between only \emph{adjacent} times. This is, for example, the case if all measurements are rank-$1$ projective measurements in a fixed basis. Then, it is easy to see that $\widetilde{\rho}_i(\mathbf{m}_{i-1:1}) \propto \Lambda_{i:i-1}[\ketbra{m_{i-1}}{m_{i-1}}]$ and $Q_i(\mathbf{m}_{n:i+1}) \propto \Lambda_{i+1:i}^\dagger[\ketbra{m_{i+1}}{m_{i+1}}]$, where $\Lambda_{i+1:i}^\dagger[\bullet] = \sum_{\ell_{i+1}} L_{i+1:i}^{(\ell_{i+1})\dagger} \bullet L_{i+1:i}^{(\ell_{i+1})}$. With this, Eq.~\eqref{eqn::LudersMulti} reduces to 
\begin{align}
\label{eqn::NCGD_rephrasing}
    &\braket{m_{i+1}| \Lambda_{i+1:i} \circ \Delta_i \circ \Lambda_{i:i-1}[\ketbra{m_{i-1}}{m_{i-1}}]|m_{i+1}} \notag \\
    = &\braket{m_{i+1}| \Lambda_{i+1:i} \circ \Lambda_{i:i-1}[\ketbra{m_{i-1}}{m_{i-1}}]|m_{i+1}}
\end{align}
for all $t_i$ and all $\{m_{i-1}, m_{i+1}\}$, where $\Delta_i[\rho] := \sum_{m_i} \braket{m_i|\rho|m_i}\ketbra{m_i}{m_i}$ is the completely dephasing map at time $t_i$. Notably, Eq.~\eqref{eqn::NCGD_rephrasing} rephrases satisfaction of the Kolmogorov consistency conditions in terms of the properties of adjacent dynamical maps $\{\Lambda_{i+1:i}, \Lambda_{i:i-1}\}$ only, thus allowing for a full characterisation of Markovian dynamics that yield classical statistics when probed in a fixed basis, as is provided in Ref.~\cite{Smirne2019}. However, this is only possible since for rank-$1$ projective measurements, the state of the system after measurement is known (up to normalisation). Any measurement with this property breaks the information flow through the system, in the sense that, upon observing a given outcome, the future outcome statistics of a Markovian process cannot depend on any previous outcomes, since the state of the system has been completely reset~\cite{Pollock2018a,Pollock2018b,Taranto_2022}. This, in turn, is what allows one to characterise the classicality of Markovian processes in terms of neighbouring dynamical maps only, as per Eq.~\eqref{eqn::NCGD_rephrasing}. The above example shows why, in the case of \textit{general} measurements, one must consider operators corresponding to the entire future and history when discussing classicality, even for Markovian processes, as we have done throughout this article.

Finally, given that for the special case of (rank-$1$) projective measurements, NCGD dynamics provides a necessary \textit{and} sufficient condition for the classicality of the observed statistics, and all Kraus operators of the measurements are Hermitian, one might expect that L{\"u}ders-type assertions can be made with respect to commutation relations of pertinent operators (like those of Thms.~\ref{thm:abs-CM-to-KM} and~\ref{thm:KM-to-abs-CM}). However, this is not the case, as the following example demonstrates: 
\begin{example}
\label{ex::NCGD_incl}
Consider a three step qubit process on times $\{t_1,t_2,t_3\}$ with measurements in the computational basis and intermediate dynamics given by the CPTP maps
\begin{gather}
\begin{split}
    &\Lambda_{2:1}[\,\bullet\,] = \frac{1}{2}\left(\begin{array}{cc} 1 & 1 \\ \iu & -\iu \end{array}\right) \bullet \left(\begin{array}{cc} 1 & -\iu \\ 1 & \iu \end{array}\right) \\ 
    \text{and} \quad &\Lambda_{3:2}[\,\bullet\,] = \frac{1}{2}\left(\begin{array}{cc} 1 & 1 \\ 1 & -1 \end{array}\right) \bullet \left(\begin{array}{cc} 1 & 1 \\ 1 & -1 \end{array}\right)\, ,
\end{split}
\end{gather}
i.e., a rotation from the computational basis to the eigenbasis of $\sigma_y$ between $t_1$ and $t_2$, followed by a Hadamard gate between $t_2$ and $t_3$. It is easy to see that, for measurements in the computational basis 
\begin{gather}
\begin{split}
    &\braket{m_{3}| \Lambda_{3:2} \circ \Delta_2 \circ \Lambda_{2:1}[\ketbra{m_1}{m_1}]|m_3} \notag \\
    = &\braket{m_{3}| \Lambda_{3:2} \circ \Lambda_{2:1}[\ketbra{m_{1}}{m_{1}}]|m_{3}} = \frac{1}{2}
\end{split}
\end{gather}
for all $m_1, m_3 \in \{0,1\}$, and thus the dynamics is NCGD [since it satisfies Eq.~\eqref{eqn::NCGD_rephrasing}]. However, it neither satisfies the inclusion property~\eqref{eq:F-sub-H} of Thm.~\ref{thm:KM-to-abs-CM}, nor any of the commutation relations we have discussed throughout this article. With respect to the former, it is easy to see that $\mathbbm{H}_2 = \text{span}\{\ketbra{+i}{+i}, \ketbra{-i}{-i}\}$ and  $\mathbbm{F}_2 = \text{span}\{\ketbra{+}{+}, \ketbra{-}{-}\}$ holds, where $\ket{\pm i} := 1/\sqrt{2}(\ket{0}\pm \iu \ket{1})$. Since these are the spaces spanned by the eigenvectors of $\sigma_y$ and $\sigma_x$, respectively, neither of them is included within the other. 

With respect to commutation relations, as mentioned above, we have $\widetilde \rho_2(m_1) \propto \Lambda_{2:1}[\ketbra{m_1}{m_1}]$ and $Q_2(m_3) = \Lambda^\dagger_{3:2}[\ketbra{m_3}{m_3}]$, which implies (up to normalisation) 
\begin{gather}
\begin{split}
    \widetilde{\rho}_2(0/1) &= \frac{1}{2}(\ket{0} \pm \iu \ket{1})(\bra{0} \mp \iu \bra{1}) =: \ketbra{i\!\pm}{i\!\pm}\, , \\
    Q_2(0/1) &= \frac{1}{2} (\ket{0}\pm \ket{1})(\ket{0}\pm \ket{1}) =: \ketbra{\pm}{\pm} \, .
\end{split}
\end{gather}
Together with $K_2^{(m_2)} = \ketbra{m_2}{m_2}$, we then obtain, e.g., 
\begin{gather}
\label{eqn::nonvancomm}
    [K_2^{(0)}, Q_2(0)] = \frac{1}{2}(\ketbra{0}{+} - \ketbra{+}{0}) \neq 0\, ,
\end{gather}
i.e., commutativity \`a la L\"uders (and Thm.~\ref{thm:KM-to-abs-CM}) does not hold. Furthermore, as a consequence of Eq.~\eqref{eqn::nonvancomm}, we have $|[K_2^{(m_2)},Q_2(m_3)]| \propto \ident$, such that 
\begin{gather}
    \mathrm{tr}\big[ \widetilde{\rho}_2(m_1)|[K_2^{(m_2)},Q_2(m_3)]|\big] = \mathrm{tr}[\widetilde{\rho}_2(m_1)] \neq 0 \, .
\end{gather}
Thus, for this example, neither commutativity nor absolute commutativity with respect to $\widetilde \rho_2$ hold, and no inclusion property of the relevant spaces is satisfied. \hfill $\blacksquare$\end{example}

Overall, we thus see that, even for the simple case of Markovian dynamics and projective measurements---where necessary and sufficient conditions for classicality are known in terms of NCGD dynamics---no commutation relations between the relevant operators are implied, at least none of the ones discussed in this paper.

\section{Discussion and Conclusion}\label{sec:conclusion}

Throughout this article, we have analysed the connection between classicality and commutativity for Markovian processes probed at multiple points in time. In the two-time setting, it is straight forward to identify the pertinent operators whose commutativity should be assessed. Using the the availability of a full basis of input states, one can then proof an equivalence between commutativity and non-invasiveness, providing a connection between operational and structural notions of classicality. In the multi-time setting, Kolmogorov consistency conditions provide an operationally meaningful notion of classicality; however, it is, a priori, unclear what the relevant operators are to check for commutativity. Here, we have identified the relevant operators, and our work can be seen as a multi-time extension of L{\"u}ders' theorem. As discussed, many crucial assumptions of L{\"u}ders' theorem immediately break down (or become too restrictive) in the multi-time setting, e.g., the guarantee of a full basis of system states at each time. Nonetheless, we have detailed the relevant operators and commutator expressions that imply a connection between operational and structural notions of classicality, putting these distinct notions on a comparable mathematical footing. We have thus overcome a number of complications that arise naturally in physically meaningful scenarios, including probing open system dynamics over multiple times with general quantum measurements. 

In particular, in Sec.~\ref{sec::no_ext_Lud}, we first exemplified how Kolmogorov consistency does not guarantee the vanishing of the analogous commutator expression to the one L{\"u}ders originally considered, and subsequently derived a novel relevant `absolute' commutator expression that indeed implies that satisfaction of Kolmogorov consistency conditions (see Thm.~\ref{thm:abs-CM-to-KM}). Following this, in Sec.~\ref{subsec:classical-com}, we derived additional assumptions such that Kolmogorov consistency implies commutativity (see Thm.~\ref{thm:KM-to-abs-CM}). Lastly, in Sec.~\ref{sec::NCGD}, we connected our results with existing literature to demonstrate the connection between commutativity, classicality, and the ability of the dynamics to generate and detect coherence with respect to sharp measurements in a fixed (but otherwise arbitrary) basis. Along the way, we showed, by way of numerous (counter-)examples that the connection between commutativity and classicality indeed requires more restrictive assumptions in the multi-time case than in the two-time scenario. In turn, this demonstrates that the consideration of multi-time phenomena fundamentally prohibits `simple' extensions of L\"uders' theorem.

Our results provide a connection between commutation relations and the classicality of the observed statistics. However, the absence of necessary \textit{and} sufficient conditions, highlighted via the examples of processes that satisfy none, or just some, of the commutator relations that we identified serve to demonstrate that in the multi-time case, a direct connection between mathematical and operational notions of classicality is far more elusive than in the two-time case (even in the simplest case of projective measurements in a fixed basis). Looking forward, our work opens the door to a number of interesting avenues for exploration. Following our general exposition regarding the structural implications of operational classicality, it would firstly be interesting to identify necessary \textit{and} sufficient conditions for the classicality of observed statistics. While this is a daunting task in general, starting from our considerations, such results might be readily derivable for dynamics that are particularly relevant to certain physical situations: Just as NCGD dynamics equates structural properties to Kolmogorov consistency, we expect it to be possible to derive similarly strong correspondences between particular types of dynamics (e.g., dephasing, depolarising, thermalising, etc.) and the classicality of statistics observed for certain types of instruments (e.g., measure-and-prepare, unital instruments, etc.). Since Kolmogorov consistency ensures the existence of an underlying classical stochastic process that reproduces the statistics correctly, this would in turn shed light on the types of noise that can be effectively replaced by classical environments~\cite{Szankowski2022}, which would have profound impact on the fields of optimal quantum control, reservoir engineering, and the simulation of complex open dynamics. 
\begin{acknowledgments}
We would like to thank {\L}ukasz Cywi{\'n}ski for insightful discussions. F.S. acknowledges support by the Foundation for Polish Science (IRAP project, ICTQT, contract no. 2018/MAB/5, co-financed by EU within Smart Growth Operational Programme). P.T. is supported by the Austrian Science Fund (FWF) project: Y879-N27 (START). S.M. acknowledges funding from the
Austrian Science Fund (FWF): ZK3 (Zukunftkolleg)
and Y879-N27 (START project), the European Union’s Horizon 2020 research and innovation programme under the Marie Sk{\l}odowska Curie grant agreement No 801110, and the Austrian Federal Ministry of Education, Science and Research (BMBWF). The opinions expressed in this publication are those of the authors, the EU Agency is not responsible for any use that may be made of the information it contains.
\end{acknowledgments}

\bibliographystyle{apsrev4-1}

%

\end{document}